\newtheorem{theorem}{Theorem}[section]
\newtheorem{lemma}[theorem]{Lemma}
\newtheorem{corollary}[theorem]{Corollary}
\newtheorem{proposition}[theorem]{Proposition}
\newtheorem*{example*}{Example}
\newtheoremstyle{myexample}{3pt}{3pt}{\rmfamily}{}{\itshape}{:}{ }{\thmname{#1}\thmnumber{ #2}\thmnote{ (#3)}}
\theoremstyle{myexample}
\newtheorem{numremark}[theorem]{Remark}
\newtheoremstyle{myremark}{3pt}{3pt}{\rmfamily}{}{\itshape}{:}{ }{\thmname{#1}}
\theoremstyle{myremark}
\newtheorem*{observation*}{Observation}
\newtheoremstyle{conjecture}{3pt}{3pt}{\itshape}{}{\bfseries}{.}{ }{\thmname{#1}\thmnote{ (#3)}}
\theoremstyle{conjecture}
\newtheorem*{question*}{Question}
\newtheorem{theorem*}{Theorem}
\numberwithin{equation}{section}
\newcounter{algorithm}
\renewcommand{\thealgorithm}{\thesection.\arabic{algorithm}}
\newenvironment{algorithm}[1]{%
  \null
  \refstepcounter{algorithm}%
  \hrule%
  \vspace{0.2em}%
  \noindent\textbf{Algorithm \thealgorithm} #1
  \vspace{0.2em}%
  \hrule%
  \vspace{0.2em}
  }{%
  \vspace{0.2em}%
  \hrule%
  \null
}
\begin{document}

\title{Geometric clustering in normed planes}

\author{Pedro Mart\'{\i}n}
\address{Departamento de Matem\'{a}ticas, UEx, 06006 Badajoz, Spain}
\email{pjimenez@unex.es}

\author{Diego Y\'{a}\~{n}ez}
\address{Departamento de Matem\'{a}ticas, UEx, 06006 Badajoz, Spain}
\email{dyanez@unex.es}

\date{(date1), and in revised form (date2).}
\subjclass[2000]{46B20, 52A10, 52A21, 52B55, 65D18}
\keywords{geometric clustering, normed space}

\thanks{}

	\begin{abstract}
		Given two sets of points $A$ and $B$  in a normed plane, we prove that there are two linearly separable sets $A'$ and $B'$ such that $\mathrm{diam}(A')\leq \mathrm{diam}(A)$, $\mathrm{diam}(B')\leq \mathrm{diam}(B)$, and $A'\cup B'=A\cup B.$ This extends a result for the Euclidean distance to symmetric convex distance functions. As a consequence, some Euclidean $k$-clustering algorithms are adapted to normed planes, for instance, those that minimize the maximum, the sum, or the sum of squares of the $k$ cluster diameters. The 2-clustering problem when two different bounds are imposed to the diameters is also solved. The Hershberger-Suri's data structure for managing ball hulls can be useful in this context.

	\end{abstract}

\maketitle

\section{Introduction and notation}
Given a set $S$ of $n$ points in the plane, a \textit{cluster} is any nonempty subset of $S$, and a $k$-\textit{clustering} is a set of $k$ clusters such that any point of $S$ belongs to some cluster.
Fixed a distace function on the plane, in general, a \textit{clustering problem} asks for a $k$-clustering of $S$ that minizes or maximizes a function $\mathcal{F}:\mathbb{R}^k\to \mathbb{R}$ defined on the clusters, where usually $\mathcal{F}$ depends on the distance function.
For instance,   Avis (\cite{Av}, $O(n^2 \log n)$ time) and Asano et al. (\cite{Asano}, $O(n \log n)$ time) for $k=2$, and  Hagauer and Rote  (\cite{Hagauer-Rote}, $O(n^2 \log^2 n)$ time) for $k=3$, present algorithms that minimize the maximum Euclidean diameter of the clusters.
 Capoyleas et al. (\cite{Capoyleas and others})  prove that if $\mathcal{F}$ is a monotone increasing function applied over the diameters or over the radii of the clusters in the Euclidean plane, the $k$-clustering problem of minimizing    $\mathcal{F}$ can be solved in polynomial time. Examples of $\mathcal{F}$ are the maximum, the sum, or the sum of squares of $k$ non-negative arguments. All the algorithms cited above are based on the fact that any two clusters in an optimal solution can be separated by a line. We prove in Section \ref{linear separability of clusters}
 that this last statement is true for any symmetric convex distance function (Theorem \ref{separacion de dos conjuntos2}), and as a consequence we justify in Section \ref{2-clustering problems}, Section \ref{$k$-clustering problems}, and Section \ref{3-clustering} 
 that all such as approaches work correctly in every normed plane.

Hershberger and Suri (\cite{H-S}) consider the 2-clustering problem where individual constraints are specified for each of the clusters. Given a measure $\mu$, and a pair of positives real numbers $d_1$ and $d_2$, they find algorithms to split $S$  into two subsets $S_1$ and $S_2$ such that $\mu(S_1)\leq d_1$ and $\mu(S_2)\leq d_2$. The measure $\mu$ can be the Euclidean diameter of the set ($O(n \log n)$ time); the area, perimeter, and diagonal of the smallest rectangle with sides parallel to the coordinates axes ($O(n \log n)$ time); or the radius of the  smallest enclosing sphere with the norms $L_1$ ($O(n \log n)$ time) and $L_2$ ($O(n^2 \log n)$ time). Although we prove  that Hersberger-Suri's approach does not work for every normed plane when $\mu$ is the diameter, an optimal solution based on separable sets can always be computed in $O(n^2 \log n)$ time (Section \ref{2-clustering constraints}).

In order to solve the above Euclidean 2-clustering problem,
Hershberger and Suri introduce 
 a data structure that stores the information about the intersection set of all the balls of a given radius $d$ that contain $S$, usually called $d$-\textit{ball hull} or $d$-\textit{circular hull} of $S$. This data structure is an interesting tool for other $k$-clustering algorithms in the Euclidean subcase, and can play an important roll when others norms are considered (see Appendix). For instance it is useful in  the extension of Hagauer-Rote's algorithm (Section \ref{3-clustering}).

From now on, we denote by $\mathbb{E}^2$ the Euclidean plane, and by $\mathbb{M}^2$  a \textit{normed plane}, namely, $\mathbb{R}^2$ endowed with a convex symmetric distance funtion $\|\cdot\|$. We call $B(x,r)$ to the \textit{ball with center $x\in \mathbb{M}^2$ and radius $r>0$}, and $S(x,r)$ to the \textit{sphere} of $B(x,r)$. We use the usual abrevations $\mathrm{diam}(A)$ and  $\mathrm{conv}(A)$ for the \textit{diameter} and the \textit{convex hull} of a set $A$, $\overline{ab}$ for the \textit{line segment} meeting two points $a,b\in \mathbb{M}^2$, and  $\langle a,b \rangle$  for its affine hull.

\section{Linear separability of clusters}\label{linear separability of clusters}

We say that two sets of points in  $\mathbb{M}^2$ are \emph{linearly separable} (for short, \emph{separable}) if there exits a line $L$ such that every set is situated in a different closed half plane defined by $L$. The following result is presented in \cite{Capoyleas and others}.

\begin{theorem}\label{separacion de dos conjuntos}
	Let $A$ and $B$ be two sets of  a finite number of points in $\mathbb{E}^2$. Then, there are two separable sets $A'$ and $B'$ such that $\mathrm{diam}(A')\leq \mathrm{diam}(A)$, $\mathrm{diam}(B')\leq \mathrm{diam}(B)$, and $A'\cup B'=A\cup B.$
\end{theorem}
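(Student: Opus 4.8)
The plan is to keep the point set $S:=A\cup B$ fixed and only \emph{repartition} it, searching for a separable partition among those that already respect the diameter bounds. Writing $d_A:=\mathrm{diam}(A)$ and $d_B:=\mathrm{diam}(B)$, let $\mathcal{P}$ be the family of all partitions $(A',B')$ of $S$ with $\mathrm{diam}(A')\le d_A$ and $\mathrm{diam}(B')\le d_B$. This family is nonempty, since $(A,B)\in\mathcal{P}$, so I may select an \emph{extremal} element of $\mathcal{P}$, for instance one minimizing the total intra-cluster distance $\Phi(A',B'):=\sum_{\{x,y\}\subseteq A'}\|x-y\|+\sum_{\{x,y\}\subseteq B'}\|x-y\|$, breaking ties if needed by the number of crossing pairs of intra-cluster segments. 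The whole argument then reduces to showing that a $\Phi$-minimal partition is separable; its diameters are controlled for free, since it lies in $\mathcal{P}$.

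First I would localize the failure of separability. Two finite sets are separable exactly when their convex hulls have disjoint interiors, so by Kirchberger's theorem (equivalently, a Radon-partition argument) it suffices to analyse obstructions supported on at most four points: either (a) a \emph{crossing}, i.e.\ points $a_1,a_2\in A'$ and $b_1,b_2\in B'$ in convex position whose diagonals $\overline{a_1a_2}$ and $\overline{b_1b_2}$ meet at an interior point $p$; or (b) an \emph{enclosure}, i.e.\ three points of one cluster whose triangle contains a point of the other. If no such four-point obstruction occurs, Kirchberger guarantees separability and the proof is finished.

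The engine against a crossing is the quadrilateral inequality: applying the triangle inequality at the crossing point $p\in\overline{a_1a_2}\cap\overline{b_1b_2}$ yields $\|a_1-b_1\|+\|a_2-b_2\|\le\|a_1-a_2\|+\|b_1-b_2\|$, together with the analogous bound for the opposite pairing. Hence replacing the two crossing diagonals by a pair of opposite sides never lengthens them, which suggests reassigning one witness point to the other cluster so as to uncross the configuration, staying inside $\mathcal{P}$ while strictly decreasing $\Phi$ and thereby contradicting minimality. The enclosure case (b) I would dispatch separately: if a point $b\in B'$ lies in $\mathrm{conv}(A')$, then for each $a\in A'$ the convex function $x\mapsto\|x-a\|$ attains its maximum over $\mathrm{conv}(A')$ at an extreme point, so $\|b-a\|\le\mathrm{diam}(A')\le d_A$; thus $b$ may be moved into the enclosing cluster without violating any diameter bound, removing that obstruction.

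The delicate point — and the step I expect to be the main obstacle — is that the quadrilateral inequality only controls the mutual distances of the \emph{four} witness points, whereas moving a point between clusters also creates new intra-cluster distances to \emph{all} the remaining points of the receiving cluster. Guaranteeing simultaneously that both constraints $\mathrm{diam}(A')\le d_A$ and $\mathrm{diam}(B')\le d_B$ survive the reassignment and that the potential strictly drops (so the process terminates) is exactly where care is required: it forces the potential, the witness, and the point to be moved to be chosen in a coordinated way, and it demands that the degenerate low-cardinality and collinear configurations be ruled out by hand. I would expect the successful argument to hinge on always relocating the point whose distances to the rest of its cluster are themselves bounded by one of the crossing diagonals, so that every newly created intra-cluster distance is majorized by a distance already present before the exchange.
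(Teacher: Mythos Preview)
Your proposal is a plan, not a proof, and the gap you yourself flag is genuine and is not closed by the final paragraph. In the crossing case you want to move one of the four witnesses, say $a_1\in A'$, into $B'$. The quadrilateral inequality controls only $\|a_1-b_1\|$ and $\|a_1-b_2\|$; it says nothing about $\|a_1-b\|$ for the \emph{other} points $b\in B'$, and there is no mechanism in your setup forcing these to be at most $d_B$. Your hoped-for fix---``relocate the point whose distances to the rest of its cluster are themselves bounded by one of the crossing diagonals''---does not follow from the crossing configuration alone, and without it the move can leave $\mathcal{P}$. The same objection hits the claim that $\Phi$ strictly drops: the change in $\Phi$ involves \emph{all} distances from $a_1$ to $A'\setminus\{a_1\}$ versus all distances from $a_1$ to $B'$, not just the four witness distances, so the quadrilateral inequality is far too local to force a decrease. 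Even the enclosure case is incomplete: moving $b$ into $A'$ preserves the $d_A$ bound, but you give no reason $\Phi$ decreases, so the ``process terminates'' claim has no support.

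The paper (following Capoyleas--Rote--Woeginger) does \emph{not} argue variationally. It analyses the alternating pieces $A_1,B_1,\dots,A_k,B_k$ cut out by the boundaries of $\mathrm{conv}(A)$ and $\mathrm{conv}(B)$, calls a pair $(A_i,B_j)$ \emph{bad} when $\mathrm{diam}(A_i\cup B_j)>\mathrm{diam}(A)$, and uses the quadrilateral inequality globally to prove that bad pairs (after grouping) form an odd, completely interlacing system around $\mathrm{conv}(A)\cap\mathrm{conv}(B)$. This combinatorial structure lets one exhibit a \emph{single explicit line} $L$ through two of the crossing points $u_s$ that simultaneously cuts every bad segment; the resulting halves $A',B'$ are then checked directly (Propositions~2.7 and~2.8). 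The essential difference is that the paper never moves one point at a time---it produces the separating line in one shot from the crossing pattern of all bad segments, so the difficulty you isolate (controlling distances to the rest of the receiving cluster after a swap) simply does not arise.
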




In the rest of this section we work in $\mathbb{M}^2$ and our objective is to prove the statement of Theorem \ref{separacion de dos conjuntos}. Without loss of generality we assume that  $\mathrm{diam}(A)\geq \mathrm{diam}(B)$. Let us denote  $\{u_1,u_2,\dots, u_{2k}\}$ the sequence of points in clockwise order where the boundaries of $\mathrm{conv}(A)$ and $\mathrm{conv}(B)$ cross (Figure \ref{convexAB}). $\mathrm{conv}(A)\setminus\mathrm{conv}(B)$ and $\mathrm{conv}(B)\setminus\mathrm{conv}(A)$ are made by two interlacing sequences of polygons $\{A_1,A_2,\dots,A_k\}$ and $\{B_1,B_2,\dots,B_k\}$ such that (for convenience, $u_{2k+1}:=u_1$ and $A_{k+1}:=A_1$):  $A_i$ touches $B_i$ at $u_{2i}$; $B_i$ touches $A_{i+1}$ at $u_{2i+1}$; the vertices of any $A_i$ belong either to $A\setminus B$ or to $\mathrm{conv}(A)\cap \mathrm{conv}(B)$; the vertices of any $B_j$ belong either to $B\setminus A$ or to $\mathrm{conv}(A)\cap \mathrm{conv}(B)$. We say that $(A_i,B_j)$ is a \emph{bad pair} if $\mathrm{diam}(A_i\cup B_j)>\mathrm{diam}(A)$. In such as case, $A_i$ is a \emph{bad set} and $B_j$ is its \textit{bad partner}, and viceversa. If $\|a_i-b_j\|>\mathrm{diam}(A)$ for some $a_i\in A_i$ and $b_j\in B_j$, then both $a_i$ and $b_j$ are \textit{bad points}, $a_i$ is a \textit{bad partner} of $b_j$ (and viceversa), and the segment $\overline{a_ib_j}$ is a \textit{bad segment}.

\begin{figure}
	\vspace*{-1cm}
\begin{center}
	\hspace*{-1cm}
\scalebox{0.65}{\includegraphics[width=17cm,angle=270]{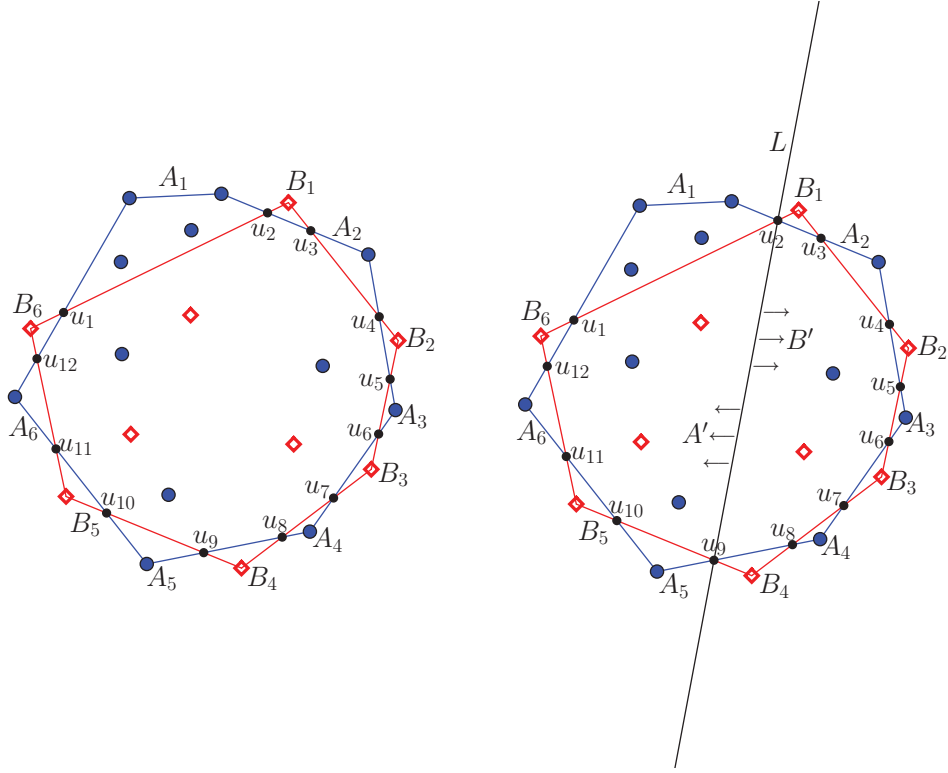}}\\
\vspace{-0.75cm}\caption{$A$ (blue points) and $B$ (red points) are not separable (left). $A\cup B$ can be split by $L$ into new subsets $A'$ and $B'$ without increase of the Euclidean diameters (right).}\label{convexAB}
	\end{center}
\end{figure}


\begin{lemma}\label{cross0}
 Let $(A_i,B_j)$ and $(A_{i'}, B_{j'})$ be two bad pairs such that $A_i\neq A_{i'}$ and $B_j\neq B_{j'}$. Let us choose $a_i\in A_i, b_j\in B_j, a_{i'}\in A_{i'}, b_{j'}\in B_{j'}$ such that $\overline{a_ib_j}$ and $\overline{a_{i'}b_{j'}}$ are bad segments. Then, either these bad segments intersect, or any point $a\in A_m$ belonging to the halfplane defined by $\langle b_jb_{j'}\rangle$ where $a_i$ and $a_{i'}$ are not contained, is not bad.
\end{lemma}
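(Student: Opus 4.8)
The plan is to reduce everything to one elementary metric fact about crossing segments in $\mathbb{M}^2$ and then to run a purely planar orientation argument on top of it. First I would record the \emph{crossing inequality}: if two segments $\overline{pq}$ and $\overline{rs}$ have a common point $x$, then, writing $\|p-q\|=\|p-x\|+\|x-q\|$ and $\|r-s\|=\|r-x\|+\|x-s\|$ and applying the triangle inequality to the pairs $(p,r)$ and $(q,s)$, one gets
\[
\|p-r\|+\|q-s\|\le \|p-q\|+\|r-s\|.
\]
This uses only that $x$ lies on both segments together with the subadditivity of $\|\cdot\|$, so it holds verbatim for every symmetric convex distance function, with no appeal to Euclidean structure.

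The key consequence — call it the \emph{non-crossing property} — is the following: if $\overline{a_1b_1}$ and $\overline{a_2b_2}$ are two bad segments (so $\|a_1-b_1\|,\|a_2-b_2\|>\mathrm{diam}(A)$, with the $a$'s in $\mathrm{conv}(A)$ and the $b$'s in $\mathrm{conv}(B)$), then the \emph{connector segments} $\overline{a_1a_2}$ and $\overline{b_1b_2}$ cannot cross. Indeed, a crossing would give, by the crossing inequality,
\[
2\,\mathrm{diam}(A)<\|a_1-b_1\|+\|a_2-b_2\|\le\|a_1-a_2\|+\|b_1-b_2\|\le \mathrm{diam}(A)+\mathrm{diam}(B)\le 2\,\mathrm{diam}(A),
\]
a contradiction, since $\mathrm{diam}(\mathrm{conv}(A))=\mathrm{diam}(A)$, $\mathrm{diam}(\mathrm{conv}(B))=\mathrm{diam}(B)$, and $\mathrm{diam}(A)\ge\mathrm{diam}(B)$. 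This single observation is what replaces all the Euclidean angle arguments and should do the heavy lifting.

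With these in hand I would argue by contradiction. Assume the two bad segments $\overline{a_ib_j}$ and $\overline{a_{i'}b_{j'}}$ do \emph{not} intersect, yet some $a\in A_m$ lying on the far side of $L=\langle b_jb_{j'}\rangle$ (the side avoided by $a_i$ and $a_{i'}$) is bad; let $b$ be a bad partner of $a$, so that $\overline{ab}$ is a third bad segment with $b\in\mathrm{conv}(B)$. Applying the non-crossing property to the pairs $(\overline{ab},\overline{a_ib_j})$ and $(\overline{ab},\overline{a_{i'}b_{j'}})$ shows that $\overline{aa_i}$ does not meet $\overline{bb_j}$ and that $\overline{aa_{i'}}$ does not meet $\overline{bb_{j'}}$; applying it to the two given segments shows $\overline{a_ia_{i'}}$ does not meet $\overline{b_jb_{j'}}$. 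Since $a$ is separated from both $a_i$ and $a_{i'}$ by $L$, the connectors $\overline{aa_i}$ and $\overline{aa_{i'}}$ each cross $L$, and this is exactly where the side hypothesis enters.

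The remaining work is a finite orientation analysis: I would locate the bad partner $b$ relative to $L$ (and relative to the chord $\overline{b_jb_{j'}}$), and in each case combine the three non-crossing conditions above with the fact that $a$ is on the far side to produce a forbidden crossing — either a crossing of two connector segments (contradicting the non-crossing property) or a crossing of $\overline{a_ib_j}$ with $\overline{a_{i'}b_{j'}}$ (contradicting the hypothesis). I expect this last step to be the main obstacle: the statement is robust, but the bookkeeping of which-point-lies-on-which-side-of-which-line, and in particular showing that $b$ cannot escape to the far side of $L$ where it would break the linking, is the delicate part that the figure is meant to make transparent. Everything preceding it is forced by the crossing inequality and its non-crossing corollary.
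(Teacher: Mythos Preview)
Your plan is essentially the paper's own proof. Your ``crossing inequality'' and its ``non-crossing property'' are exactly the content of the paper's Case~1 (the alternating cyclic order $a_i,b_{j'},a_{i'},b_j$ is precisely the configuration where $\overline{a_ia_{i'}}$ meets $\overline{b_jb_{j'}}$), and the paper invokes it in the same way you do.

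The part you flag as the obstacle --- the finite orientation analysis placing the bad partner $b$ --- is carried out in the paper not by locating $b$ relative to $L=\langle b_j,b_{j'}\rangle$ but by drawing, from $a$, the two half-lines through $a_i$ and $a_{i'}$ and the two full lines through $b_j$ and $b_{j'}$. These cut the plane into six sectors. One sector (the wedge between the two half-lines, on the side of $a$) is excluded by convexity of $\mathrm{conv}(B)$: any point there is separated from $b_j$ and $b_{j'}$ by the segment $\overline{a_ia_{i'}}\subset\mathrm{conv}(A)$, hence cannot lie in $\mathrm{conv}(B)$. In each of the remaining five sectors one of the two pairs $(\overline{ab},\overline{a_ib_j})$ or $(\overline{ab},\overline{a_{i'}b_{j'}})$ has its connector segments crossing, so your non-crossing property gives the contradiction. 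This six-zone picture is what makes the ``which-side-of-which-line'' bookkeeping transparent; your three non-crossing conditions are exactly the tools it uses.
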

\begin{proof}
	Let us assume that $\overline{a_ib_j}$ and $\overline{a_{i'}b_{j'}}$ are bad segments with an empty intersection set.
	There are two cases (disregarding symmetric variations) for the relative positions of the points on the boundary of $\mathrm{conv}(\{a_i,a_{i'},b_j,b_{j'}\})$.
	
	Case 1: \textit{$a_i,b_{j'},a_{i'},b_j$ is the sequence of the points in clockwise order}. Then, we get a contradiction:
	\begin{multline*}
	\mathrm{diam}(A)+\mathrm{diam}(B)\geq \|a_i-a_{i'}\|+\|b_j-b_{j'}\|\geq\\ \|a_i-b_j\|+\|a_{i'}-b_{j'}\|>2\ \mathrm{diam}(A).
	\end{multline*}
		
	Case 2: \textit{$a_i,a_{i'},b_{j'}, b_j$ is the sequence of the points in clockwise order}. Let us assume that there exists a bad segment $\overline{a_mb_k}$ such that $a_m\in A_m$ belonging to the halfplane defined by $\langle b_jb_{j'}\rangle$ where $a_i$ and $a_{i'}$ are not contained. The half-lines starting in $a_m$ and connecting $a_m$ with  $a_i$ and with $a_{i'}$, and the lines $\langle a_m,b_j \rangle$  and $\langle a_m,b_{j'} \rangle$, divide the plane in six zones (see Figure \ref{posicion4puntosmalos}).
		\begin{figure}[ht]
		\begin{center}
			\scalebox{0.5}{\includegraphics{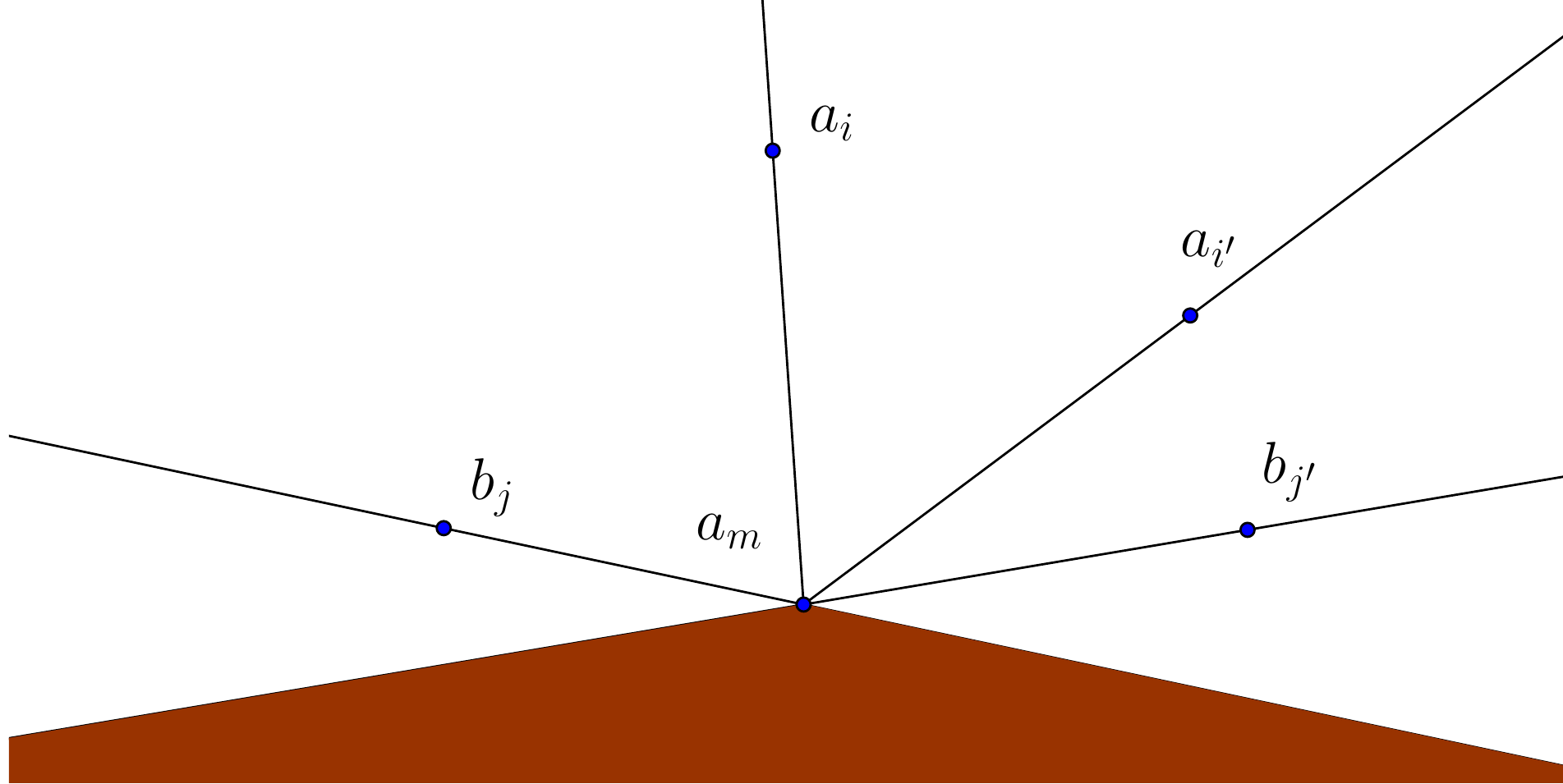}}\\
\caption{If $(a_i,b_j)$ and $(a_{i'},b_{j'})$ are bad partners, then the shaded zone can not contain a bad partner of $a_m\in A_m$}\label{posicion4puntosmalos}
		\end{center}
	\end{figure}
	%
		By convexity, one of these zones (the shaded zone in Figure \ref{posicion4puntosmalos}) can not contain $b_k$. If $b_k$ belongs to whichever other zone, it is possible to consider a quadrangle whose vertices are situated in clockwise order like in Case 1, and we get a contradiction. Therefore, if this case holds, then $a_m$ is not a bad point.
\end{proof}


\begin{numremark}
		Case 2 does not occur in $\mathbb{E}^2$, and as a consequence every two bad segments from disjoint bad pairs $(A_i,B_j)$ and $(A_{i'},B_{j'})$ cross. In order to prove this in \cite{Capoyleas and others}, it is used the property that in an obtuse triangle the longest side  is opposite to the obtuse angle. 
		But if we consider the normed plane with unit sphere made by two arcs of circunferences showed in Figure \ref{contraej_2_configuracion}, the triangle with vertices $a_m,b,c,$ has an obtuse angle on vertix $a_m$, and the side $\overline{bc}$ is not the longest one. Besides, 
	 there is a configuration of points similar to Case 2 where $\overline{a_{i'}b_{j'}}$ and $\overline{a_{i}b_{j}}$ are non intersecting, and such that $\mathrm{min}\{\|a_i-b_j\|, \|a_{i'}-b_{j'}\|\}>\mathrm{diam}(\{a_i,a_{i'},a_m\})=1>\mathrm{diam}(\{b_j,b_{j'}\})$.

\end{numremark}


	\begin{figure}[ht]
		\vspace*{-1cm}
    \hspace*{-1.5cm}
\includegraphics[width=15cm]{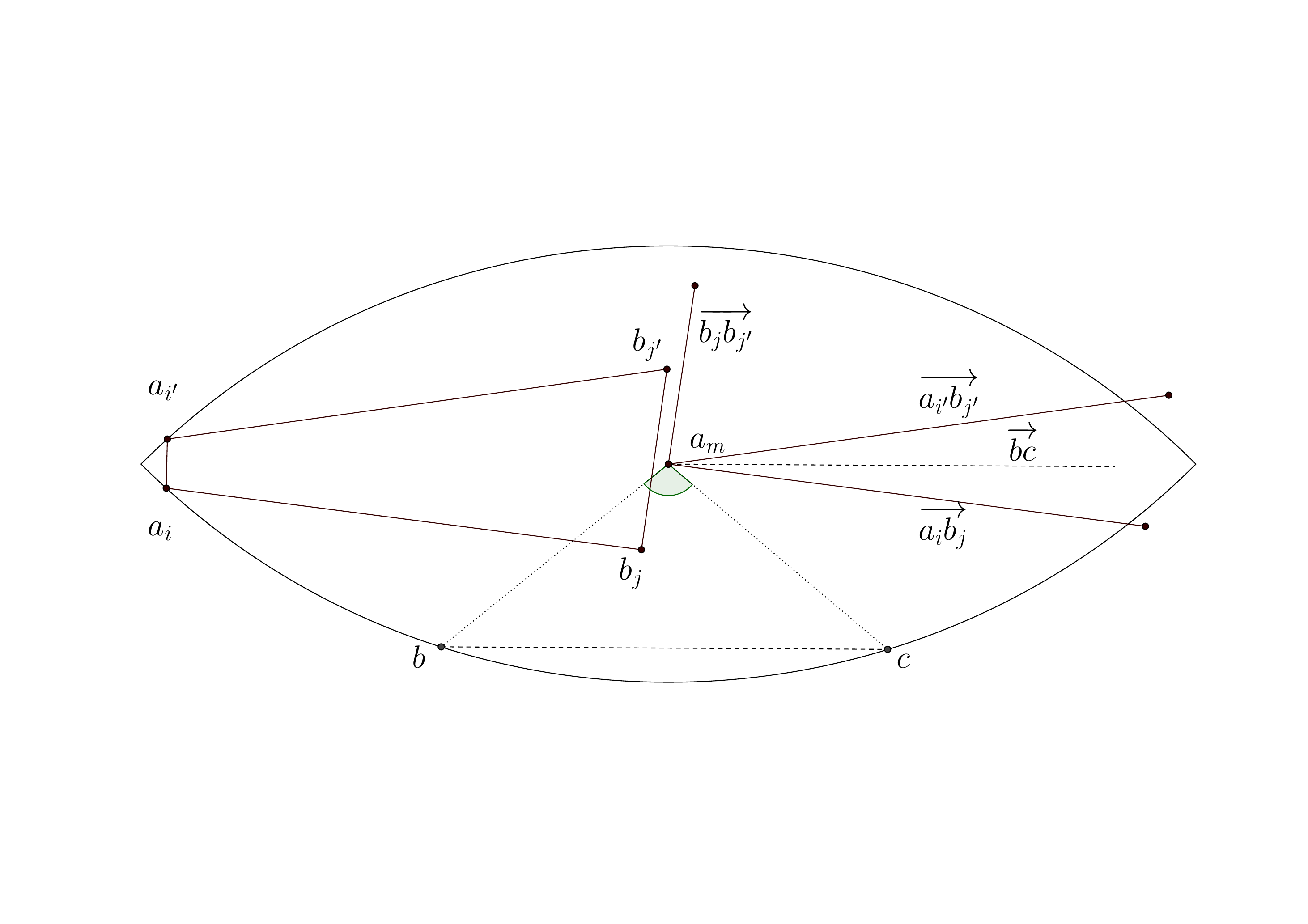}
\vspace{-2cm}\caption{$\|b-c\|<\|a_m-b\|=\|a_m-c\|=1$ and \hspace{2cm}
$	\|a_{i'}-b_{j'}\|>\|a_i-b_j\|>\mathrm{diam}(\{a_i,a_{i'},a_m\}) >\mathrm{diam}(\{b_j,b_{j'}\})$.}\label{contraej_2_configuracion}
\end{figure}


Before splitting the sets $A$ and $B$, we group all the bad adjacent subsets $A_i$ from the cluster $A$. Namely, maximal cyclic groups of bad subsets $A_i$ are made. If $A_i$ and $A_{i'}$ (clockwise order) are bad subsets belonging to the same group, then there is not any not bad $A_k$  between   $A_i$ and $A_{i'}$, although some not bad $B_j$ can be situated between $A_i$ and $A_{i'}$. The same is made with cluster $B$. These maximal cyclic groups are noted by $\bar{A}_1, \bar{A}_2,\dots,\bar{A_p}$ and $\bar{B_1},\bar{B_2},\dots,\bar{B_q}$.

We say that $(\bar{A}_i, \bar{B}_{j})$ is a \textit{bad pair of groups} if there exits a bad segment from $\bar{A}_i$ to $\bar{B}_{j}$. Two pair of sets $(A_i, B_{j})$ and  $(A_{i'}, B_{j'})$ cross if there exist two (one from every pair) bad-crossing segments. Similarly, $(\bar{A}_i, \bar{B}_{j})$ and  $(\bar{A}_{i'}, \bar{B}_{j'})$ cross if there exist two (one from every pair) bad-crossing segments.
\begin{lemma}\label{group}
	Let $(A_i,B_j)$ and $(A_{i'},B_{j'})$ be two  bad pairs such that $A_i\neq A_{i'}$ and $B_j\neq B_{j'}$. If $A_i$ and $A_{i'}$ belong to a group $\bar{A_k}$, then a group $\bar{B_t}$ contains $B_{j}$ and $B_{j'}$.
\end{lemma}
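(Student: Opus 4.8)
The plan is to argue by contradiction, after reducing the statement to a single claim: that \emph{no not-bad $B$-piece lies strictly between $B_j$ and $B_{j'}$} on the relevant arc. Recall that a group $\bar{B_t}$ is a maximal cyclic run of bad $B$-pieces, so $B_j$ and $B_{j'}$ fail to share a group precisely when, on the boundary arc that would merge them, there sits a $B$-piece that is not bad at all. The hypothesis I intend to exploit is that, since $A_i$ and $A_{i'}$ lie in the common group $\bar{A_k}$, every $A$-piece met while passing from $A_i$ to $A_{i'}$ along the group arc is bad; thus the group arc carries an uninterrupted supply of bad $A$-pieces, each with at least one bad partner.

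First I would fix bad segments $\overline{a_ib_j}$ and $\overline{a_{i'}b_{j'}}$ and record the cyclic order of their four endpoints on the boundary of $\mathrm{conv}(A\cup B)$. Because a bad segment has length exceeding $\mathrm{diam}(A)\geq \mathrm{diam}(B)$, its endpoints are extreme, so their cyclic order coincides with the clockwise order of the pieces $A_i,A_{i'},B_j,B_{j'}$. Applying Lemma \ref{cross0} to the disjoint bad pairs leaves exactly two possibilities: either the two bad segments cross, or the Case~2 configuration holds and no bad point of any $A_m$ occupies the half plane of $\langle b_j,b_{j'}\rangle$ opposite to $a_i$ and $a_{i'}$. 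In the crossing branch the interleaving of endpoints makes the bad relation restricted to the group arc \emph{reverse} cyclic order, so the bad partners of $A_i,\dots,A_{i'}$ sweep a contiguous arc of $B$-pieces whose extremes are $B_j$ and $B_{j'}$; in both branches the ordering forces $B_j$ and $B_{j'}$ onto the arc complementary to the group arc, which is the setting in which "lying between them" is meaningful.

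The heart of the proof, and the step I expect to be the genuine obstacle, is to show that a putative not-bad piece $B_s$ with $B_j,B_s,B_{j'}$ in clockwise order cannot exist. I would argue that such a $B_s$ is in fact bad, by producing an $A$-piece $A_m$ of the group together with points $a_m\in A_m$, $b_s\in B_s$ with $\|a_m-b_s\|>\mathrm{diam}(A)$. The natural tool is that for $D=\mathrm{diam}(A)$ the ball intersection $\bigcap_{a\in A}B(a,D)$ is convex and contains $\mathrm{conv}(A)$, so $B_s$ not bad means $B_s\subseteq \bigcap_{a\in A}B(a,D)$ while $B_j$ and $B_{j'}$ protrude from this convex set. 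The difficulty is exactly where the Euclidean argument of \cite{Capoyleas and others} breaks down: in a general normed plane balls may be polygonal, so convexity alone permits the ``slab'' pattern in which two bumps protrude while the bump between them does not, and the obtuse-triangle reasoning available in $\mathbb{E}^2$ is unavailable. To close this gap I would feed the dichotomy of Lemma \ref{cross0} back in. In the crossing branch I would use the crossing point of $\overline{a_ib_j}$ and $\overline{a_{i'}b_{j'}}$ together with the position of $b_s$ beyond $\langle b_j,b_{j'}\rangle$ to force $b_s$ out of $B(a_i,D)\cap B(a_{i'},D)$, invoking an intermediate bad group-piece $A_m$ whose partner must reach $B_s$ lest the order-reversal established above be violated. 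In the Case~2 branch I would apply the half plane conclusion of Lemma \ref{cross0} directly: the excluded region is precisely where $B_s$ sits, so either $B_s$ supplies the required long segment from a group $A$-piece or one of the original bad points is forbidden, a contradiction.

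Once every $B$-piece strictly between $B_j$ and $B_{j'}$ is shown to be bad, these pieces form an uninterrupted clockwise run of bad $B$-pieces, hence lie in a single group $\bar{B_t}$ that contains both $B_j$ and $B_{j'}$, which is the assertion. The reductions of the first two paragraphs are essentially bookkeeping about cyclic order; the real work is the third step, where converting ``lying between two protruding pieces'' into ``itself protruding'' demands the normed-plane crossing analysis of Lemma \ref{cross0} rather than the Euclidean angle comparison.
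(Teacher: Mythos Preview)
Your reduction in the first paragraph is correct, but the third paragraph—the step you yourself flag as the obstacle—does not close. You aim to show that a putative not-bad $B_s$ between $B_j$ and $B_{j'}$ is actually bad, yet neither branch of your dichotomy delivers this. In the crossing branch, nothing about the crossing of $\overline{a_ib_j}$ and $\overline{a_{i'}b_{j'}}$ ``forces $b_s$ out of $B(a_i,D)\cap B(a_{i'},D)$'': the crossing point together with the position of $b_s$ beyond $\langle b_j,b_{j'}\rangle$ yields no distance lower bound in a general norm, and the order-reversal you invoke concerns \emph{which} $B$-piece is the partner of a given bad $A$-piece, not whether a fixed $B_s$ is a bad partner of anything. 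In the Case~2 branch you misapply Lemma~\ref{cross0}: its half-plane conclusion constrains bad points of $A$-pieces lying beyond $\langle b_j,b_{j'}\rangle$, not points of $B$-pieces, so it says nothing about $b_s$.

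The paper avoids this difficulty by taking a different route. Rather than argue that an intermediate $B$-piece is bad, it observes that if $B_j$ and $B_{j'}$ lie in different $\bar B$-groups then a bad $A_m$ sits between them on the arc opposite $\bar A_k$, chooses a bad partner $B_k$ of $A_m$, and plays this \emph{third} bad pair $(A_m,B_k)$ against the two original ones. Lemma~\ref{cross0} first forces $(A_i,B_j)$ and $(A_{i'},B_{j'})$ to cross, since otherwise $A_m$, lying beyond $\langle b_j,b_{j'}\rangle$, could carry no bad point. Then, because $A_i,A_{i'}$ sit together in $\bar A_k$ on the opposite arc, $(A_m,B_k)$ can cross at most one of the two original bad segments; whichever one it crosses, the remaining original pair and $(A_m,B_k)$ land in the forbidden Case~1 cyclic pattern of Lemma~\ref{cross0}, a contradiction. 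The key idea you are missing is to introduce an auxiliary bad pair anchored at a bad $A$-piece separating $B_j$ from $B_{j'}$, and to compare it with the two given pairs, rather than to try to certify directly that a not-bad $B_s$ is bad.
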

\begin{proof}
	Let us assume that $(A_i,B_j)$ and $(A_{i'},B_{j'})$ are two bad pairs such that $A_i$ and $A_{i'}$ belong to the same group, but $B_j$ and $B_{j'}$ belong to different groups. Then it must exist a bad set $A_m$ between $B_j$ and $B_{j'}$. Let $B_k$ be a bad partner of $A_m$. $(A_i,B_j)$ and $(A_{i'},B_{j'})$ must cross (if not, by Lemma \ref{cross0}, $A_m$ can not contain bad points). Since $A_i$ and $A_{i'}$ belong to the same group, only one of them (not both) cross with $(A_m,B_k)$. Let us assume that  $(A_i,B_j)$ and $(A_m,B_k)$ cross. There exist  $a_m\in A_m,b_k\in B_{k}, a_{i'}\in A_{i'},b_{j'} \in B_{j'}$ that would be situated in an impossible  clockwise order $a_m,b_{k},a_{i'},b_{j'}$ (similar to Case 1 in Lemma \ref{cross0}), and we get a contradiction.
\end{proof}

Due to Lemma \ref{group}, the number of maximal cyclic groups for $A$ and for $B$ is the same.

\begin{lemma}\label{cross}
	Let $(\bar{A}_i, \bar{B}_{j})$ and $(\bar{A}_{i'}, \bar{B}_{j'})$ be two bad pair of groups such that $\bar{A}_i\neq \bar{A}_{i'}$ and $\bar{B}_j\neq \bar{B}_{j'}$. Then $(\bar{A}_i, \bar{B}_{j})$ and $(\bar{A}_{i'}, \bar{B}_{j'})$ cross.
\end{lemma}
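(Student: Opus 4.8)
The plan is to lift the point-level dichotomy of Lemma \ref{cross0} to the level of groups, arguing by contradiction and exploiting the contiguity that is built into the definition of a group. Assume that $(\bar{A}_i,\bar{B}_j)$ and $(\bar{A}_{i'},\bar{B}_{j'})$ do not cross. Since each is a bad pair of groups, I would first fix representative bad segments: $\overline{ab}$ with $a\in A_s\subseteq \bar{A}_i$ and $b\in B_t\subseteq \bar{B}_j$, and $\overline{a'b'}$ with $a'\in A_{s'}\subseteq \bar{A}_{i'}$ and $b'\in B_{t'}\subseteq \bar{B}_{j'}$. Because the four groups are pairwise distinct, $A_s\neq A_{s'}$ and $B_t\neq B_{t'}$, so $(A_s,B_t)$ and $(A_{s'},B_{t'})$ are two bad pairs of \emph{distinct} individual sets, and Lemma \ref{cross0} applies to them.

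If $\overline{ab}$ and $\overline{a'b'}$ intersect we are finished, since by definition this is exactly what it means for $(\bar{A}_i,\bar{B}_j)$ and $(\bar{A}_{i'},\bar{B}_{j'})$ to cross. Otherwise Lemma \ref{cross0} forces the Case~2 configuration: the four endpoints appear in the clockwise order $a,a',b',b$ (the nested, non-interlacing order), and no bad point may lie in the open halfplane bounded by $\langle b,b'\rangle$ on the side not containing $a$ and $a'$. The heart of the argument is then to plant a bad piece inside this forbidden halfplane. The mechanism I would use is the contiguity of groups: since $\bar{A}_i\neq\bar{A}_{i'}$ are distinct maximal runs of bad $A$-pieces, each of the two boundary arcs joining them must contain a separating non-bad $A$-piece, and tracking the arc running from $a'$ through $b',b$ to $a$, together with Lemma \ref{group} (which confines the bad partners of the members of one group to a single opposite group), should produce a member of one of the four groups — hence, a bad piece $A_m$ with a bad partner $B_k$ and a bad segment $\overline{a_mb_k}$ — positioned so that $a_m$ falls in the forbidden halfplane. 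Applying Lemma \ref{cross0} to the pairs involving $(A_m,B_k)$ then forces certain of these segments to intersect, which in turn exhibits four bad points in an impossible clockwise order of the Case~1 type; the associated estimate $\mathrm{diam}(A)+\mathrm{diam}(B)\geq \|a-a'\|+\|b-b'\|\geq \|a-b\|+\|a'-b'\|>2\,\mathrm{diam}(A)$ yields the contradiction, exactly in the spirit of Lemma \ref{cross0} and Lemma \ref{group}.

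I expect the main obstacle to be precisely this localization step, namely ruling out the nested cyclic order by showing that non-crossing, combined with the group structure, necessarily forces a bad group-member into the forbidden halfplane. This requires careful bookkeeping of the cyclic positions of the four groups and of the non-bad pieces that separate consecutive groups of the same type, and it is here that the normed (as opposed to Euclidean) setting matters, since Case~2 is genuinely possible at the level of individual pairs and is only excluded once the full group is taken into account. The remaining ingredients — the reduction to representatives and the terminal Case~1 inequality — are routine given Lemmas \ref{cross0} and \ref{group}.
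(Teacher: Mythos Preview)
Your approach is essentially the paper's. The paper argues by exhausting the three possible cyclic orders of the four groups: the interlaced-but-mismatched order $\bar{A}_i,\bar{B}_{j'},\bar{A}_{i'},\bar{B}_j$ is excluded by the Case~1 inequality of Lemma~\ref{cross0}; the nested order $\bar{A}_i,\bar{A}_{i'},\bar{B}_{j'},\bar{B}_j$ is excluded because Case~2 of Lemma~\ref{cross0} forbids any bad $A_m$ on the short arc between $\bar{B}_{j'}$ and $\bar{B}_j$, contradicting the separation that distinct $\bar B$-groups require; only the crossing order $\bar{A}_i,\bar{A}_{i'},\bar{B}_j,\bar{B}_{j'}$ survives. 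Your contradiction-from-non-crossing plan, reducing to representatives and then splitting into the Case~1/Case~2 dichotomy of Lemma~\ref{cross0}, is exactly this argument rephrased.

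You correctly flag the localization step---producing a bad $A_m$ in the forbidden halfplane---as the crux. The paper is equally terse here: it simply asserts that in the nested order ``$\bar{B}_{j'}$ and $\bar{B}_j$ can not be separated by a bad polygon $A_m$'' and treats this as a contradiction, implicitly relying on the same mechanism you sketch (the interplay between maximality of the groups and Lemma~\ref{group}, mirroring the use of a separating bad $A_m$ in the proof of Lemma~\ref{group} itself). So your plan and the paper's proof share both the strategy and the one under-justified hinge; filling it in amounts to making precise why two distinct $\bar{B}$-groups force a bad $A$-piece on the arc between them, which is the bookkeeping you already anticipated.
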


\begin{proof}
The clockwise order can not be $\bar{A}_i, \bar{B}_{j'},\bar{A}_{i'}, \bar{B}_{j}$ (due to the arguments used in Lemma \ref{cross0}, Case 1); and neither $\bar{A}_i, \bar{A}_{i'}, \bar{B}_{j'},\bar{B}_{j}$, because then  $\bar{B}_{j'}$ and $\bar{B}_{j}$ can not be separated by a bad polygon $A_m$ (Lemmma \ref{cross0}, Case 2). Therefore, the clockwise order must be $\bar{A}_i, \bar{A}_{i'}, \bar{B}_{j}, \bar{B}_{j'}$, and the groups cross.
\end{proof}
And we obtain the following from Lemma \ref{cross}.
\begin{corollary}
	There is an odd number of groups from each cluster, and they are completely interlacing.
\end{corollary}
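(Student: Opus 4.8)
The plan is to read off both assertions from the antipodal pairing that Lemma \ref{cross} forces on the groups. First I would make the pairing explicit: by Lemma \ref{group} applied to $A$, together with its mirror image applied to $B$, all bad partners of the lobes of a single group $\bar{A}_i$ lie in one group $\bar{B}_{\phi(i)}$, and symmetrically. Hence $\phi$ is a bijection from the $A$-groups to the $B$-groups whose inverse is the analogous map for $B$, the bad pairs of groups are exactly the $p$ matched pairs $(\bar{A}_i,\bar{B}_{\phi(i)})$, and in particular $p=q$.

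Next I would locate the groups on the boundary. Regarding each matched pair as a chord joining $\bar{A}_i$ to $\bar{B}_{\phi(i)}$, Lemma \ref{cross} says these $p$ chords pairwise cross, and a family of pairwise crossing chords with $2p$ distinct endpoints is rigid: the chord through any group must split the remaining endpoints into two arcs of equal size, since every other chord meets it. Numbering the groups $1,\dots,2p$ clockwise, this forces group $s$ to be matched to group $s+p$. Because each chord joins an $A$-group to a $B$-group, the antipodal groups $s$ and $s+p$ always carry opposite cluster labels.

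Then I would prove the interlacing, namely that no two cyclically consecutive groups share a cluster. Suppose $\bar{A}_i$ and $\bar{A}_{i'}$ were adjacent $A$-groups; their partners $\bar{B}_{\phi(i)}$ and $\bar{B}_{\phi(i')}$ are two distinct $B$-groups sitting $p$ steps away, hence adjacent by the diametral pattern. But, exactly as in the proof of Lemma \ref{group}, two distinct $B$-groups must have a bad polygon $A_m$ strictly between them; this $A_m$ lies in an $A$-group located between $\bar{B}_{\phi(i)}$ and $\bar{B}_{\phi(i')}$, contradicting their adjacency. Thus consecutive groups alternate, and with $p$ groups in each cluster the cyclic pattern is $\bar{A}_1,\bar{B}_1,\dots,\bar{A}_p,\bar{B}_p$. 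The parity is then immediate: in an alternating arrangement of $2p$ groups, positions $s$ and $s+p$ carry the same label precisely when $p$ is even, whereas the matched groups carry opposite labels; hence $p$ is odd.

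The step I expect to be the real obstacle is the interlacing. Lemma \ref{cross} on its own is consistent with segregated patterns such as $\bar{A}\,\bar{A}\,\bar{A}\,\bar{B}\,\bar{B}\,\bar{B}$, which still satisfy the diametral, opposite-label pairing, so ruling these out genuinely requires the sandwiching fact borrowed from the proof of Lemma \ref{group}. The delicate point is to justify that fact uniformly in an arbitrary $\mathbb{M}^2$, where Case 2 of Lemma \ref{cross0} can actually occur and the Euclidean shortcut used in \cite{Capoyleas and others} is unavailable; once alternation is secured, the matching from Lemma \ref{group} and the parity count are purely formal.
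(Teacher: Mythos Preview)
Your approach is essentially the paper's: the paper offers no proof beyond the one line ``And we obtain the following from Lemma \ref{cross},'' and your outline simply unpacks what that line must mean, using the bijection from Lemma \ref{group}, the pairwise crossing of Lemma \ref{cross}, and the sandwiching observation (a bad $A_m$ between distinct $B$-groups) that the paper itself invokes inside the proofs of those two lemmas. You are right that crossing alone does not force interlacing and that the sandwiching fact is the real content---this is exactly the ingredient the paper uses when it rules out the cyclic order $\bar{A}_i,\bar{A}_{i'},\bar{B}_{j'},\bar{B}_j$ in the proof of Lemma \ref{cross}.
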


Let $A_i$ be the last bad set of a group (in clockwise order), and let $B_{j'}$ be the last bad partner of $A_i$. Let $B_j$ be the first bad set after $A_i$, and let $A_{i'}$ be the first bad partner of $B_j$. We choose the separating line $L$ to go through the point $u_{2j}$ before $B_j$ and the point $u_{2j'+1}$ after $B_{j'}$ (see Figure \ref{convexAB}). We define $B'$ to be the points in $A\cup B$ lying on the same side of $L$ as $B_j$ and $B_{j'}$, and $A'$ as the remaining points.

\begin{proposition}
	The diameter of $A'$ is less than or equal to the diameter of $A$.
\end{proposition}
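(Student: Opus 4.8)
The plan is to establish the bound pointwise: take arbitrary $p,q\in A'$ and show $\|p-q\|\leq\mathrm{diam}(A)$. If $p$ and $q$ are both inherited from the original set $A$ this is immediate, and if both are inherited from $B$ then $\|p-q\|\leq\mathrm{diam}(B)\leq\mathrm{diam}(A)$ by the standing assumption $\mathrm{diam}(A)\geq\mathrm{diam}(B)$. So the entire content reduces to the mixed case, and it suffices to prove that no \emph{bad segment} $\overline{ab}$, with $a\in A$, $b\in B$ and $\|a-b\|>\mathrm{diam}(A)$, can have both of its endpoints in $A'$; equivalently, that every bad segment has an endpoint on the $B$-side of $L$.

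First I would record the geometry of the cut. The line $L$ runs through $u_{2j}$ (the vertex where $A_j$ meets $B_j$) and through $u_{2j'+1}$ (where $B_{j'}$ meets $A_{j'+1}$), and $B'$ is by definition the side carrying the whole arc $B_j,\dots,B_{j'}$. Hence $A_i$ and $A_{i'}$ sit on the $A$-side while $B_j$ and $B_{j'}$ sit on the $B$-side, so the two reference bad segments $\overline{a_ib_{j'}}$, from $A_i$ to its last bad partner $B_{j'}$, and $\overline{a_{i'}b_j}$, from $B_j$ to its first bad partner $A_{i'}$, each cross $L$. These two segments are the tools I would play any offending segment against.

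Now suppose, for contradiction, that a bad segment $\overline{ab}$ with $a\in A_m$ and $b\in B_n$ lies entirely on the $A$-side. Then $B_n$ is a bad $B$-set lying off the arc $B_j,\dots,B_{j'}$, so in clockwise order it falls either after $B_{j'}$ or before $B_j$. In the first alternative I would compare $\overline{ab}$ with $\overline{a_ib_{j'}}$, using that $B_{j'}$ is the \emph{last} bad partner of $A_i$ to force the four points into the alternating clockwise order $a_i,b_{j'},a,b$; in the second I would compare $\overline{ab}$ with $\overline{a_{i'}b_j}$, using that $B_j$ is the \emph{first} bad $B$-set after $A_i$ and $A_{i'}$ its \emph{first} bad partner, to force the order $a_{i'},b_j,a,b$. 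Either way the two bad segments become a pair of opposite sides of a convex quadrilateral whose diagonals join two points of $A$ and two points of $B$; this is exactly the configuration of Case~1 of Lemma~\ref{cross0}, and the diagonal-versus-opposite-side inequality for a convex quadrilateral (valid in $\mathbb{M}^2$ by the triangle inequality) yields
\[
\mathrm{diam}(A)+\mathrm{diam}(B)\geq\|a_i-a\|+\|b_{j'}-b\|\geq\|a_i-b_{j'}\|+\|a-b\|>2\,\mathrm{diam}(A),
\]
so $\mathrm{diam}(B)>\mathrm{diam}(A)$, a contradiction. It follows that no bad segment stays inside $A'$, whence $\mathrm{diam}(A')\leq\mathrm{diam}(A)$.

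The hard part will be the combinatorial bookkeeping that certifies the alternating (Case~1) order in each alternative rather than the benign order in which the two bad segments cross. This is where the extremality designed into the construction, namely $A_i$ being the last bad set of its group, $B_{j'}$ its last bad partner, $B_j$ the first bad set following it, and $A_{i'}$ the first bad partner of $B_j$, has to be combined with the interlacing-of-groups corollary and with Case~2 of Lemma~\ref{cross0} to exclude every non-crossing alternative. A secondary technical point is dealing with bad points that are shared vertices lying in $\mathrm{conv}(A)\cap\mathrm{conv}(B)$ at the crossing points $u_\ell$, so that every far-apart mixed pair is genuinely realized by a bad segment between data points.
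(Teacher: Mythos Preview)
Your reduction is exactly the paper's: show that no bad segment has both endpoints on the $A'$-side of $L$, and handle the non-mixed cases by $\mathrm{diam}(B)\leq\mathrm{diam}(A)$. The paper's own proof is literally the one-sentence assertion ``$L$ cuts all bad pairs'', stated as an immediate consequence of the construction of $L$ together with the interlacing structure established just before (Lemma~\ref{cross} and its Corollary). So what you are attempting is not an alternative argument but a justification of the very line the paper leaves terse.

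Your plan for that justification is reasonable, but as you yourself flag, the combinatorial step forcing the alternating (Case~1) order is not yet nailed down. In particular, invoking ``$B_{j'}$ is the last bad partner of $A_i$'' constrains only partners of $A_i$, not partners of an arbitrary $A_m$, so the order $a_i,b_{j'},a,b$ does not drop out directly; you really do need the global interlacing of groups (odd in number and mutually crossing) to rule out a bad pair $(A_m,B_n)$ living entirely on the $A'$-side. The secondary technical point you raise is harmless: if either endpoint of a mixed pair lies in $\mathrm{conv}(A)\cap\mathrm{conv}(B)$ then both endpoints lie in $\mathrm{conv}(A)$ (or both in $\mathrm{conv}(B)$), so the distance is at most $\mathrm{diam}(A)$ and no bad segment arises there.
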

\begin{proof}
	Since $L$ cuts all bad pairs, there does not exist a bad point $a'\in A'$ with a bad partner inside $A'$ (the same happens with $B'$), and the diameter of  $A$ (and as well as the diameter of $B'$) have length less than or equal to $\mathrm{diam}(A)$.	
\end{proof}	

\begin{proposition}
	The diameter of $B'$ is less than or equal to the diameter of $B$.
\end{proposition}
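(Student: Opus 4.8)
The plan is to refine the argument just given for $A'$. That argument used only that $L$ severs every bad pair, and so it actually yields the weaker estimate $\mathrm{diam}(B')\le\mathrm{diam}(A)$ as well; the entire difficulty here is to sharpen this to $\mathrm{diam}(B)$ under the standing hypothesis $\mathrm{diam}(A)\ge\mathrm{diam}(B)$. By the choice of $L$ through $u_{2j}$ and $u_{2j'+1}$, the set $B'$ consists of the bad polygons $B_j,\dots,B_{j'}$ of the partner group $\bar{B}_t$ together with the $A$-polygons $A_{j+1},\dots,A_{j'}$ situated between them; by the complete interlacing of the groups, none of these intervening $A$-polygons is bad. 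Since any two points of $B$ are at distance at most $\mathrm{diam}(B)$, it suffices to prove that whenever $a$ lies in one of these non-bad $A$-polygons of $B'$, one has $\|a-q\|\le\mathrm{diam}(B)$ for every $q\in B'$.

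I would argue by contradiction, assuming $\|a-q\|>\mathrm{diam}(B)$ for such an $a$ and some $q\in B'$. The key device is a bad segment of the group adjacent to the cut: from the last bad pair $(A_i,B_{j'})$ pick $a_i\in A_i$ and $b_{j'}\in B_{j'}$ with $\|a_i-b_{j'}\|>\mathrm{diam}(A)$, and observe that $a_i\in A'$ while $a,b_{j'},q\in B'$. If the four points can be placed in convex position with alternating cyclic order, so that the $A$--$A$ segment $\overline{a_ia}$ and the $B$--$B$ segment $\overline{b_{j'}q}$ are the two crossing diagonals, then applying the triangle inequality at their intersection point (a step valid in any normed plane) gives
\[
\mathrm{diam}(A)+\mathrm{diam}(B)\ \ge\ \|a_i-a\|+\|b_{j'}-q\|\ \ge\ \|a_i-b_{j'}\|+\|a-q\|\ >\ \mathrm{diam}(A)+\mathrm{diam}(B),
\]
a contradiction. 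The symmetric anchor $(A_{i'},B_j)$ at the other end of the arc should cover the orderings for which this particular anchor fails to produce alternating diagonals.

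The main obstacle is exactly this positional step: showing that one of the two anchors always furnishes crossing diagonals of the required $A$--$A$/$B$--$B$ type. This is where the proof must dispense with Euclidean-specific metric facts, since, as the Remark following Lemma~\ref{cross0} shows, the degenerate \emph{Case~2} configuration genuinely occurs in a general $\mathbb{M}^2$; here I would reuse the clockwise-order case analysis of Lemma~\ref{cross0}, together with the non-badness of the $A$-polygon containing $a$ (which forbids the bad Case~2 position), to eliminate every ordering but the alternating one. I expect the genuinely delicate residual situation to be the one in which $q$ is itself a displaced $A$-point rather than a point of $B$: then both would-be diagonals are only bounded by $\mathrm{diam}(A)$, and the clean contradiction above breaks down, so this case must instead be reduced to the mixed one, for instance by exhibiting a vertex of $B_k$ between the two offending $A$-polygons and showing that the support of $B'$ in the direction $a-q$ is attained at a point of $B$. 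Securing that reduction, and thereby the alternating convex position in every ordering, is the crux on which the whole proposition turns.
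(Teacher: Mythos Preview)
Your framework is the paper's: argue by contradiction, use the two anchors $(a_i,b_{j'})$ and $(a_{i'},b_j)$, and apply the convex-quadrangle inequality. But you misdiagnose the ``genuinely delicate'' case and miss a small one.

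When both $a$ and $q$ are displaced $A$-points (the paper's Case~2), the quadrangle argument does \emph{not} break down. The paper first proves the mixed case (your easy case, its Case~1: one point in an $A$-polygon, the other in a $B$-polygon). That case, applied to $q$ and the anchor point $b_j\in B_j\subset B'$, already yields $\|q-b_j\|\le\mathrm{diam}(B)$. With this bound in hand, the \emph{same} quadrangle on $\{b_j,a,q,a_{i'}\}$ (or $\{b_j,q,a,a_{i'}\}$, according to the clockwise order of $a,q$) gives
\[
\mathrm{diam}(A)+\mathrm{diam}(B)\ \ge\ \|a-a_{i'}\|+\|q-b_j\|\ \ge\ \|b_j-a_{i'}\|+\|a-q\|\ >\ \mathrm{diam}(A)+\mathrm{diam}(B),
\]
a contradiction. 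No support-line reduction, no auxiliary $B_k$ vertex, no appeal to Lemma~\ref{cross0}, and no use of the non-badness of the intervening $A$-polygons is required. You also omit a case in your decomposition of $B'$: a point of $A\cap B'$ need not lie in one of the $A$-polygons $A_m\subset\mathrm{conv}(A)\setminus\mathrm{conv}(B)$; it may lie in $\mathrm{conv}(A)\cap\mathrm{conv}(B)$. The paper's Case~3 disposes of this by passing to an extreme vertex of $\mathrm{conv}(A)\cap\mathrm{conv}(B)\cap\mathrm{conv}(B')$, which then falls under Case~1 or Case~2.
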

\begin{proof}
Let $a,b\in B'$. We have to prove that $\|a-b\|\leq \mathrm{diam}(B).$   
	 If $a,b\in B$ there is nothing to prove. In other case, let us assume that $\|a-b\|>\mathrm{diam}(B)$. Let us choose $a_i\in A_i$, $b_j\in B_j$, $a_{i'}\in A_{i'}$, $b_{j'}\in B_{j'}$ such that $(a_i,b_{j'})$ and $(a_{i'},b_{j})$ are bad pairs.
	There are three possible cases.
	
	Case 1: $a\in \mathrm{conv}(A)\setminus \mathrm{conv}(B)$ and $b\in \mathrm{conv}(B)\setminus \mathrm{conv}(A)$. The points $\{b_j,a,b,b_{j'},a_{i'},a_i\}$ are situated around $\mathrm{conv}(A)\cap \mathrm{conv}(B)$ and it is possible to consider a clockwise order.   	
 If $\{a,b\}$ is the clockwise order of these two points, we observe the quadrangle with vertices (clockwise) $\{b_j,a,b,a_{i'}\}$ and the following contradiction holds:
	\begin{multline}\label{ineq}
	\mathrm{diam}(A)+\mathrm{diam}(B)\geq \|a-a_{i'}\|+\|b-b_j\|\geq\\ \|b_j-a_{i'}\|+\|a-b\|>
	\mathrm{diam}(A)+\mathrm{diam}(B).
	\end{multline}
	If the clockwise order is $\{b,a\}$, we obtain a similar contradiction on the quadrangle with vertices (clockwise order) $\{a_i,b,a,b_{j'}\}$.

	Case 2: $a,b\in \mathrm{conv}(A)\setminus \mathrm{conv}(B)$.  Case 1 implies that $\|b-b'\|\leq \mathrm{diam}(B)$ for every $b'\in (\mathrm{conv}(B)\setminus \mathrm{conv}(A))\cap B'$. 
	If $\{a,b\}$ is the clockwise order of these two vertices, we can apply an argument similar to (\ref{ineq}) to the quadrangle $\{b_j,a,b,a_{i'}\}$:
	\begin{multline*}
	\mathrm{diam}(A)+\mathrm{diam}(B)\geq \|a-a_{i'}\|+\|b-b_j\|\geq\\ \|b_j-a_{i'}\|+\|a-b\|>
	\mathrm{diam}(A)+\mathrm{diam}(B),
	\end{multline*}
	which is again a contradiction. If the order is $\{b,a\}$, we use the quadrangle $\{b_j,b,a,a_{i'}\}$.


	Case 3: $a\in \mathrm{conv}(A)\setminus \mathrm{conv}(B)$ and $b\in \mathrm{conv}(A) \cap \mathrm{conv}(B)$. 
	Since the distance from $a$ is maximized at some vertex of $\mathrm{conv}(A) \cap \mathrm{conv}(B)\cap \mathrm{conv}(B'),$ we may assume that $b$ is one of these vertices and apply an analysis similar to Case 1 or  to Case 2.
\end{proof}

Using the previous results, we obtain the main theorem.

\begin{theorem}\label{separacion de dos conjuntos2}
	Let $A$ and $B$ be two sets of a finite number of points in $\mathbb{M}^2$. Then, there are two linearly separable sets $A'$ and $B'$ such that $\mathrm{diam}(A')\leq \mathrm{diam}(A)$, $\mathrm{diam}(B')\leq \mathrm{diam}(B)$, and $A'\cup B'=A\cup B.$
\end{theorem}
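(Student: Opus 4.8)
The plan is to reduce at once to the case where $A$ and $B$ fail to be separable, and then to produce a single line $L$ whose two closed half-planes induce the desired partition. Throughout I keep the running assumption $\mathrm{diam}(A)\geq \mathrm{diam}(B)$. If $A$ and $B$ are already separable there is nothing to prove: take $A'=A$ and $B'=B$, and the three conclusions hold trivially. Hence the entire content of the statement lives in the non-separable case, where the boundaries of $\mathrm{conv}(A)$ and $\mathrm{conv}(B)$ genuinely cross and produce the interlacing polygons $A_1,\dots,A_k$ and $B_1,\dots,B_k$ set up before Lemma \ref{cross0}.

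In that case I would assemble the preceding results as follows. If there are no bad pairs, then no two points of $A\cup B$ lying in opposite polygons exceed $\mathrm{diam}(A)$, and any line separating the two interlacing families through $\mathrm{conv}(A)\cap\mathrm{conv}(B)$ already works. Otherwise I collect the bad sets into the maximal cyclic groups $\bar{A}_1,\dots,\bar{A}_p$ and $\bar{B}_1,\dots,\bar{B}_q$: Lemma \ref{group} forces $p=q$, Lemma \ref{cross} shows that any two bad pairs of groups cross, and the corollary drawn from it guarantees that the groups are completely interlacing in an odd number. I then take $L$ to be precisely the line of the construction preceding the two propositions, through the crossing point $u_{2j}$ just before the first $B$-group following a maximal $A$-group and through $u_{2j'+1}$ just after the matching last $B$-partner. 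Setting $A'$ and $B'$ to be the points of $A\cup B$ on the two sides of $L$, the identity $A'\cup B'=A\cup B$ is immediate and separability is built in by construction, while the two already-established propositions deliver $\mathrm{diam}(A')\leq \mathrm{diam}(A)$ and $\mathrm{diam}(B')\leq \mathrm{diam}(B)$.

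The crux — and the only place where the normed setting diverges from the Euclidean Theorem \ref{separacion de dos conjuntos} — is justifying that this single line $L$ severs \emph{every} bad pair, so that no bad segment survives inside $A'$ or inside $B'$. In $\mathbb{E}^2$ this is automatic: any two bad segments from disjoint bad pairs cross, so all of them thread a common central region and one transversal meets them all. In $\mathbb{M}^2$ this can break down through Case 2 of Lemma \ref{cross0}, where two bad segments need not meet. The grouping is exactly the repair: passing from individual bad pairs to bad pairs of groups restores the crossing property at the coarser level (Lemma \ref{cross}), and the odd, completely interlacing arrangement then forces the mutually crossing group-segments to bound a common region through which $L$ passes, cutting all of them at once. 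I expect the verification of this last point — that an odd, completely interlacing family of pairwise-crossing groups admits a single transversal meeting every bad segment — to be the main obstacle, and it is precisely what the chain of lemmas above is designed to supply.
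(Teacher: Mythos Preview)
Your proposal is correct and follows the paper's approach exactly: the paper's own proof is literally the single sentence ``Using the previous results, we obtain the main theorem,'' and you have spelled out how those results assemble. Your residual concern --- that the line $L$ actually severs every bad pair --- is precisely what the paper asserts without further argument at the start of the proof of the first proposition (``Since $L$ cuts all bad pairs\ldots''), relying, as you do, on the odd, completely interlacing structure of the groups delivered by Lemma~\ref{cross} and its corollary.
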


\begin{corollary}\label{perimeter}
	The construction in Theorem \ref{separacion de dos conjuntos2} verifies that
\begin{multline*}
	\mathrm{perimeter}(\mathrm{conv}(A))+\mathrm{perimeter}(\mathrm{conv}(B))\geq\\
 \mathrm{perimeter} (\mathrm{conv}(A'))+\mathrm{perimeter}(\mathrm{conv}(B')).
\end{multline*}
	If $\mathrm{conv}(A)\cap \mathrm{conv}($B$)\neq \emptyset$, then the inequality is strict.
\end{corollary}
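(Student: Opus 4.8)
The plan is to compare the two boundaries directly, using only one feature of the norm: a straight segment is a shortest path, so for any convex arc running from $x$ to $y$ its $\|\cdot\|$-length is at least $\|x-y\|$, the length of the chord $\overline{xy}$ (equivalently, $\mathrm{per}$ is monotone under inclusion of convex bodies). Write $P=\mathrm{conv}(A)$ and $Q=\mathrm{conv}(B)$. By construction the separating line $L$ runs through the two crossing points $u_{2j}$ and $u_{2j'+1}$ of $\partial P$ and $\partial Q$; since both lie on $\partial P\cap\partial Q$, the line meets each body in the same chord, $P\cap L=Q\cap L=\overline{u_{2j}\,u_{2j'+1}}$, and $\mathrm{conv}(A')$, $\mathrm{conv}(B')$ sit in the opposite closed half-planes $H^+$, $H^-$ determined by $L$.

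First I would decompose the new boundaries. Each edge of the convex polygons $\partial(\mathrm{conv}(A'))$ and $\partial(\mathrm{conv}(B'))$ is of one of three types: a sub-arc of $\partial P$, a sub-arc of $\partial Q$, or a \emph{new} segment --- either a bridge joining a vertex coming from $A$ to a vertex coming from $B$, or a portion of $L$. Collecting lengths,
\[
\mathrm{per}(\mathrm{conv}(A'))+\mathrm{per}(\mathrm{conv}(B'))=U_P+U_Q+N,
\]
where $U_P,U_Q$ are the total $\|\cdot\|$-lengths of the used sub-arcs of $\partial P,\partial Q$ and $N$ is the total length of the new segments, while $\mathrm{per}(P)+\mathrm{per}(Q)=U_P+U_Q+R$, with $R$ the length of the \emph{unused} arcs. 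Thus the inequality reduces to $R\ge N$.

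To get $R\ge N$ I would set up a charging scheme. Every new segment $\overline{xy}$ is an edge of one of the two hulls, so $P$ and $Q$ lie on one side of its supporting line; the endpoints $x,y$ belong to $\partial P\cup\partial Q$, and the portion of $\partial P$ or $\partial Q$ joining $x$ to $y$ on the cut-off side is an unused convex arc whose length is at least $\|x-y\|$ by the chord inequality. Charging each new segment to this ``far'' arc, the cut-off caps along $L$ (used once on each side of $L$) are paid for by the two arcs that $L$ slices off $P$ and $Q$, and each bridge is paid for by the arc behind it. Summing the chord-versus-arc inequalities over a family of pairwise disjoint far arcs yields $R\ge N$, which is the desired estimate. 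For strictness, if $\mathrm{conv}(A)\cap\mathrm{conv}(B)\neq\emptyset$ the slicing produces at least one non-degenerate cut-off arc, for which the chord inequality is strict, so $R>N$.

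The main obstacle is precisely the combinatorial part of the charging: in the general configuration there are many crossing points $u_1,\dots,u_{2k}$ and several interlacing groups $\bar A_i,\bar B_j$, and I must check that the far arcs assigned to distinct new segments are pairwise disjoint (no arc paid twice) and that together they exhaust what is needed, so that the local inequalities add up globally. I expect this to follow from the cyclic interlacing of the groups already established in Lemma~\ref{cross} and its corollary, together with the fact that the hull edges of $\mathrm{conv}(A')$ and $\mathrm{conv}(B')$ inherit that cyclic order; but it is the step that requires care. Notably, the norm enters only through the elementary chord-versus-arc inequality, so---unlike the Euclidean argument in \cite{Capoyleas and others}---no metric tool specific to $\mathbb{E}^2$ is needed.
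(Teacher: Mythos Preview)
Your decomposition-and-charging route is different from the paper's, and the paper's is considerably shorter. The paper simply observes that $A'$ and $B'$ lie in the two pieces obtained by cutting the simply connected region $\mathrm{conv}(A)\cup\mathrm{conv}(B)$ along $L$; since the perimeter of a convex set is at most the boundary length of any simply connected region containing it (this is just your chord-versus-arc inequality applied once, globally), one gets
\[
\mathrm{p}(\mathrm{conv}(A'))+\mathrm{p}(\mathrm{conv}(B'))\le \textstyle\sum_i \mathrm{p}(A_i)+\sum_j \mathrm{p}(B_j)-\mathrm{p}(\mathrm{conv}(A)\cap\mathrm{conv}(B))+2l,
\]
with $l$ the length of the cut. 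Since $\sum_i \mathrm{p}(A_i)+\sum_j \mathrm{p}(B_j)=\mathrm{p}(\mathrm{conv}(A))+\mathrm{p}(\mathrm{conv}(B))$ and $2l<\mathrm{p}(\mathrm{conv}(A)\cap\mathrm{conv}(B))$ whenever the intersection is nondegenerate, the strict inequality follows at once. No edge-by-edge bookkeeping is needed.

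Your charging scheme, as you yourself flag, is where the argument is incomplete, and there is a concrete gap beyond mere bookkeeping: for a \emph{bridge} edge $\overline{xy}$ with $x$ a vertex coming from $A$ and $y$ a vertex coming from $B$, there is no single arc of $\partial P$ or of $\partial Q$ running from $x$ to $y$, so ``the arc behind it'' is not defined. You would have to charge such an edge to a concatenation of a $\partial P$-arc and a $\partial Q$-arc meeting at one of the crossing points $u_m$, and then argue that these pieces are not double-charged by other bridges or by the caps along $L$ --- exactly the combinatorics the global argument avoids. A smaller point: your preliminary claim $P\cap L=Q\cap L=\overline{u_{2j}\,u_{2j'+1}}$ is not correct in general; the points $u_{2j},u_{2j'+1}$ lie on $\partial(P\cap Q)$, but $L$ may continue past them into some $A_i$ or $B_j$, so $L\cap P$ and $L\cap Q$ can be strictly longer chords.
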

\begin{proof}
We note $\mathrm{p}(S)$ to the perimeter of a set $S$. If $\mathrm{conv}(A)\cap \mathrm{conv}(B)$ is a segment or the empty set, there is nothing to prove. Let us assume that $\mathrm{conv}(A)\cap \mathrm{conv}(B)\neq \emptyset$. We note $l$ to the length of $L\cap \mathrm{conv}(A)\cap \mathrm{conv}(B)$, where $L$ is the splitting line of bad pairs from $A'$ and $B'$ in Theorem \ref{separacion de dos conjuntos2}. The following holds (see Figure \ref{convexAB}):
\begin{multline*}
 \mathrm{p} (\mathrm{conv}(A'))+\mathrm{p}(\mathrm{conv}(B'))\leq \\
 \sum_i \mathrm{p}(A_i)+ \sum_j \mathrm{p}(B_j) - \mathrm{p}(\mathrm{conv}(A)\cap \mathrm{conv}(B))+ 2l<\\
\sum_i \mathrm{p}(A_i)+ \sum_j \mathrm{p}(B_j)=
 \mathrm{p} (\mathrm{conv}(A))+\mathrm{p}(\mathrm{conv}(B))
 \end{multline*}
\end{proof}

\section{Some applications to clustering problems}\label{some applications}
From now on,  $S$ is a set of $n$ points in a normed plane $\mathbb{M}^2$. We assume that in our computation model the unit ball of $\mathbb{M}^2$ is given via an \emph{oracle} as it is described in Section 3.3 of \cite{Gr-Kl} or on page 316 in \cite{Mat}.

\subsection{2-clustering problem: minimize the maximum diameter.}
\label{2-clustering problems}
Given a metric, the \textit{$2$-clustering problem of minimizing the maximum diameter} asks about how to split $S$ into two sets minimizing the maximum diameter. Avis solves the problem in $\mathbb{R}^2$  looking for two separable sets with the following algorithm ($O(n^2 \log^2 n)$ time).

 \begin{algorithm}

 \label{Avis}Given a set $S$ of $n$ points in the plane:
	\begin{enumerate}
	\item Sort the distances $d_i$ between the points of $S$ into increasing order ($O(n^2\log n)$ time).
	\item Locate the minimum $d_i$ that admits a \textit{stabbing line}\footnote{A \textit{stabbing line} for a set of segments is a line that intersects every segment of the set.} by a binary search. Use the graph $(S,E_{d_i})$, where $E_{d_i}$ is the set of edges meeting two points of $S$ at distance more than $d_i$, and the algorithm by Edelsbrunner et al. (\cite{E-M-P-R-W-W}) in order to find the stabbing line for $E_{d_i}$ ($O(m \log m)$ time each) as a subroutine.
	\end{enumerate}
\end{algorithm}
We obtain the following from Theorem  \ref{separacion de dos conjuntos2}.
\begin{corollary}
Given a set of $n$ points in $\mathbb{M}^2$, the $2$-clustering problem of minimizing the maximum diameter can be computed in $O(n^2 \log^2 n)$ time using Algorithm \ref{Avis}.
\end{corollary}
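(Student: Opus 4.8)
The plan is to verify that Algorithm \ref{Avis}, devised by Avis for $\mathbb{E}^2$, remains correct and retains its running time once the pairwise distances are measured with the norm of $\mathbb{M}^2$. The only metric-dependent ingredients are the $\binom{n}{2}$ distances $d_i$ and the resulting edge sets $E_{d_i}$; once these are fixed, deciding whether a stabbing line exists is a purely affine, metric-independent question about a family of segments, so the subroutine of Edelsbrunner et al.\ applies verbatim. Hence the whole argument reduces to two points: that restricting the search to separable clusterings does not miss the optimum, and that this optimum is detected by the monotone stabbing predicate driving the binary search.

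First I would reduce the problem to a search over separating lines. Let $d^{*}$ denote the optimal value, that is, the minimum over all partitions $\{C_1,C_2\}$ of $S$ of $\max\{\mathrm{diam}(C_1),\mathrm{diam}(C_2)\}$. Applying Theorem \ref{separacion de dos conjuntos2} to any optimal partition produces a separable partition $\{A',B'\}$ with $\mathrm{diam}(A')\leq d^{*}$ and $\mathrm{diam}(B')\leq d^{*}$. Therefore an optimal partition may be assumed separable, so $d^{*}$ equals the least value of $\max\{\mathrm{diam}(C_1),\mathrm{diam}(C_2)\}$ taken over the two-sided partitions induced by a line $L$ (the inequality $\leq$ is Theorem \ref{separacion de dos conjuntos2}, while $\geq$ holds because every line-induced partition is itself a partition of $S$).

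Second I would translate this into the stabbing condition. For a threshold $d$, the partition induced by a line $L$ has both parts of diameter at most $d$ exactly when no two points on the same side of $L$ are at distance greater than $d$, that is, exactly when $L$ stabs every segment of $E_{d}$. Consequently a separable clustering with maximum diameter at most $d$ exists precisely when $E_{d}$ admits a stabbing line. Since $d^{*}$ is attained as a distance between two points of $S$, it is one of the $d_i$; and since enlarging $d$ only deletes edges from $E_{d}$, the predicate ``$E_{d_i}$ admits a stabbing line'' is monotone in $i$. Thus $d^{*}$ is exactly the smallest $d_i$ located by the binary search in step (2), and the stabbing line returned there yields an optimal clustering.

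Finally the complexity follows by routine bookkeeping, identical to the Euclidean case: computing and sorting the $O(n^{2})$ distances (each obtained from the norm oracle) costs $O(n^{2}\log n)$; the binary search performs $O(\log n)$ tests, each invoking the Edelsbrunner et al.\ procedure on $m=O(n^{2})$ segments in time $O(m\log m)=O(n^{2}\log n)$; the total is $O(n^{2}\log^{2} n)$. The main obstacle is entirely the first step: the correctness of the approach hinges on being permitted to search only over separating lines, which for the Euclidean distance is Theorem \ref{separacion de dos conjuntos} and for an arbitrary symmetric convex distance function is precisely Theorem \ref{separacion de dos conjuntos2}. Once that reduction is in hand, the remaining steps are combinatorial and carry over without change.
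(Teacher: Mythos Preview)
Your argument is correct and is exactly the approach the paper intends: the corollary is stated without a separate proof, merely as a consequence of Theorem~\ref{separacion de dos conjuntos2}, and your write-up spells out precisely why that theorem is the only metric-dependent ingredient in Avis's algorithm. The remaining observations about the stabbing predicate, its monotonicity, and the time bounds are the standard Euclidean analysis carried over verbatim, as the paper implicitly assumes.
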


Asano et al. (\cite{Asano}) reduce the cost of Algorithm \ref{Avis} to $O(n \log n)$ time in $\mathbb{E}^2$. They use the \textit{maximum spanning tree}\footnote{A \textit{maximum spanning tree} is a spanning tree whose total edge length is  as large as possible.} 
 of $S$ (that can be constructed in such a time and space in $\mathbb{E}^2$; see \cite{MPSY}) instead of all the distances between points of $S$. This approach also works correctly in $\mathbb{M}^2$, but as far as we know, there is not a similar result about the cost of building a maximum spanning tree for any normed plane.



\subsection{2-clustering problem: constraints over the diameters}\label{2-clustering constraints}
Given $d_1\geq d_2>0$,  
Hershberger and Suri (\cite{H-S}) solve in $\mathbb{E}^2$  the problem of dividing $S$ into two sets $S_1$ and $S_2$ such that $\operatorname{diam}(S_1)\leq d_1$ and $\operatorname{diam}(S_2)\leq d_2$ ($O(n\log n)$ time).
%
%
They uses the fact that if $\|a-b\|\geq d_1$, then $B(a,d_2)\cap B(b,d_1)$ can always be split into two subsets whose diameters are at most $d_1$ and $d_2$, respectively.
%
%
%
%
Nevertheless, the following example shows that this can not be extended to $\mathbb{M}^2$. Let us consider $a=(0,0)$, $b=(-9.81,6.24)$, 
and the strictly convex norm whose unit sphere is bounded by the two arcs of circles with center in $(0,10)$ and in $(0,-10)$, respectively, and radius $5\sqrt{13}$ (see Figure \ref{contraej_9_configuracion}). Let $\{r=(r_1,r_2), \quad s=(s_1,s_2)\}\in S(a,1)$ and $\{p,q\}=S(a,1)\cap S(b,1.1)$, such that $r_1=-9.39$, $r_2>0$, $s_1=-8.24$, $s_2>0$, and $p,r,s,q$ is the clockwise order on $S(a,1)$. It is verified that $\|a-b\|\geq 1.1$, $\mathrm{min}\{\|s-p\|,\| r-q\|,\|p-q\|\}> 1.1$ and  $\mathrm{min}\{\|r-p\|,\|s-q\|\}> 1$. Therefore the set $S=\{p,q,r,s\}\in B(a,1)\cap B(b,1.1)$ can not be divided in two subsets whose diameters are at most $1.1$ and $1$, respectively.

\begin{figure}[ht]
	\vspace*{-2cm}
		\begin{center}
						\hspace*{-1.8cm}
				\scalebox{.55}{\includegraphics[angle=270]{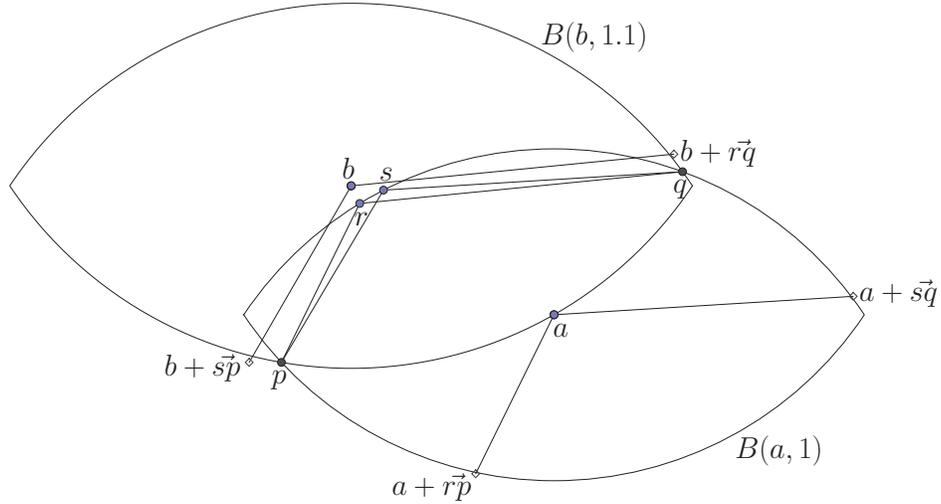}}\\
				\vspace*{-1.5cm}\caption{$S=\{p,q,r,s\}$ can not be divided in two subsets with diameters less than or equal to $1.1$ and $1$, respectively.}\label{contraej_9_configuracion}
	\end{center}
\end{figure}

However we can look for a separable pair of sets $S_1$ and $S_2$.

\begin{corollary}\label{algorithm fixed radius}
Given a set $S$ of $n$ points in $\mathbb{M}^2$, and $d_1\geq d_2>0$, the $2$-clustering problem of dividing $S$ into two sets $S_1$ and $S_2$ such that $\operatorname{diam}(S_1)\leq d_1$ and $\operatorname{diam}(S_2)\leq d_2$ can be solved in $O(n^2 \log n)$ time.
\end{corollary}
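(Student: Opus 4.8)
The plan is to turn the feasibility question into a search over \emph{separable} partitions, a reduction that is legitimized by Theorem \ref{separacion de dos conjuntos2}. Suppose some partition $(S_1,S_2)$ of $S$ with $\operatorname{diam}(S_1)\le d_1$ and $\operatorname{diam}(S_2)\le d_2$ exists. I would apply that theorem with $A=S_1$ and $B=S_2$ to obtain separable sets $A'$ and $B'$ with $A'\cup B'=S_1\cup S_2=S$, $\operatorname{diam}(A')\le\operatorname{diam}(S_1)\le d_1$, and $\operatorname{diam}(B')\le\operatorname{diam}(S_2)\le d_2$. Since the construction in that proof splits $S$ by a line $L$ (points on $L$ being assigned to one side), $(A',B')$ is again a partition, now a separable and feasible one. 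As the converse is trivial, a feasible partition exists if and only if a feasible \emph{separable} partition exists, so it suffices to examine line-separated bipartitions only.

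Next I would enumerate the separable bipartitions by rotating a directed line through all directions $\theta\in[0,\pi)$. For a fixed direction the points acquire a linear order $p_1,\dots,p_n$ by their projection onto the normal of the line, and every separable bipartition in that direction is a prefix/suffix split $\{p_1,\dots,p_k\}\mid\{p_{k+1},\dots,p_n\}$. The combinatorial order changes only at the $\binom{n}{2}$ directions orthogonal to a segment $\overline{p_ip_j}$, and at each such event a single adjacent transposition occurs; sorting these critical directions (equivalently, the pairwise distances) costs $O(n^2\log n)$, which I expect to be the dominant term, after which the order is maintained in $O(1)$ per event. For one direction, feasibility is conceptually easy because $\operatorname{diam}(\{p_1,\dots,p_k\})$ is nondecreasing and $\operatorname{diam}(\{p_{k+1},\dots,p_n\})$ is nonincreasing in $k$: for the assignment ``left${}\le d_1$, right${}\le d_2$'' a feasible cut exists iff the largest $k$ with prefix-diameter ${}\le d_1$ is at least the smallest $k$ with suffix-diameter ${}\le d_2$, and symmetrically for the swapped roles of $d_1$ and $d_2$. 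Because $d_1$ and $d_2$ are fixed, no binary search over distances is needed, which is exactly why one $\log$ factor is saved relative to Algorithm \ref{Avis}.

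The hard part will be keeping these threshold tests current across the $O(n^2)$ events without recomputing a prefix or suffix diameter from scratch, which would cost $\Theta(n^2)$ per direction and ruin the bound. I would recast the two constraints in the stabbing language already used for Algorithm \ref{Avis}: writing $E_{d}=\{\overline{ab}:\|a-b\|>d\}$, the separating line must stab every segment of $E_{d_1}$ while leaving one side free of every segment of $E_{d_2}$. In a fixed direction each segment $\overline{ab}$ is cut exactly for $k$ in the integer interval $[\min(\operatorname{rk}(a),\operatorname{rk}(b)),\max(\operatorname{rk}(a),\operatorname{rk}(b)))$ and lies wholly on the right exactly for $k<\min(\operatorname{rk}(a),\operatorname{rk}(b))$, so feasibility reduces to testing whether $\max(\alpha,\gamma)<\beta$, where $\alpha=\max_{e\in E_{d_1}}\min_e$, $\beta=\min_{e\in E_{d_1}}\max_e$, and $\gamma=\max_{e\in E_{d_2}}\min_e$ are extrema of endpoint ranks. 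The delicate point is that a single transposition alters the rank-based value of every segment incident to the two swapped points, so maintaining $\alpha$, $\beta$, and $\gamma$ naively touches $\Theta(n)$ segments per event; the engineering obstacle is to update these extrema (and the analogous quantities for the swapped assignment) in amortized $O(\log n)$ per event, using the precomputed sorted distances and the monotone structure of the sweep, so that the total stays $O(n^2\log n)$. Once this bookkeeping is in place, correctness follows from the separability reduction of the first paragraph and the corollary is established.
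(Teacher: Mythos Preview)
Your reduction to separable bipartitions via Theorem \ref{separacion de dos conjuntos2} is correct and is exactly the paper's starting point. The gap is in the second half. You set up a rotational sweep with $\binom{n}{2}$ adjacent-transposition events and correctly reduce per-direction feasibility to a single inequality of the form $\gamma<\beta$ (your $\alpha$ is redundant, since $d_2\le d_1$ gives $E_{d_1}\subseteq E_{d_2}$ and hence $\alpha\le\gamma$). But you then explicitly leave open the only nontrivial step: maintaining $\beta=\min_{e\in E_{d_1}}\max_e$ and $\gamma=\max_{e\in E_{d_2}}\min_e$ across events in amortized $O(\log n)$. As you yourself observe, a swap at ranks $i,i{+}1$ perturbs $\min_e$ or $\max_e$ for every one of the $\Theta(n)$ segments incident to the two swapped points, so this is not a routine priority-queue update; no data structure is proposed, and nothing in the argument bounds the total work below $O(n^3)$. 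Labeling this ``engineering'' and then asserting the target bound is precisely the missing idea --- without it you have established feasibility in $O(n^3)$, not $O(n^2\log n)$. (A side remark: the critical directions are slopes of the segments $\overline{p_ip_j}$, not their lengths, so ``equivalently, the pairwise distances'' is a slip.)

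The paper sidesteps the sweep entirely. After computing and sorting the $\binom{n}{2}$ distances it forms $E_{d_1}$ and calls the Edelsbrunner et al.\ stabbing-line routine on it once: if no stabbing line exists, Theorem \ref{separacion de dos conjuntos2} rules out any solution; if one exists, the paper checks whether one of the two sides it produces has diameter at most $d_2$. The point is that ``some line stabs every edge of $E_{d_1}$'' is exactly ``both halves have diameter $\le d_1$'', so the $d_1$-constraint is discharged by an off-the-shelf subroutine in one call rather than being tracked incrementally over $O(n^2)$ events. That is the idea your proposal is missing.
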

\begin{proof}
Let $E_{d_1}$ be the set of edges meeting two points of $S$ at distance more than $d_1$. Sort the distances between the points of $S$ into increasing order and build the graph $(S,E_{d_1})$ in $O(n^2\log n)$ time. Test if $E_{d_1}$ has a stabbing line (in $O(n\log n)$ time with the algorithm presented in \cite{E-M-P-R-W-W}). If the stabbing line does not exist, there is no solution (Theorem \ref{separacion de dos conjuntos2}). If the stabbing line exists, check if one of the subsets of $S$ separated by the the stabbing line has diameter less than or equal to $d_2$.
\end{proof}

\subsection{$k$-clustering problems}\label{$k$-clustering problems}
The \emph{$k$-clustering problem of minimizing the maximum diameter} is the natural extension of the case $k=2$ presented in Section \ref{2-clustering problems}. It is a particular case of the \emph{$k$-clustering problem of minimizing $\mathcal{F}$ over the diameters}, where $\mathcal{F}$ is a monotone increasing function $\mathcal{F}:\mathbb{R}^k\rightarrow\mathbb{R}$  that is applied over the diameters of the clusters (for instance, $\mathcal{F}$ can be the \emph{maximum}, the \emph{sum}, or the \emph{sum of squares} of the diameters). If we consider the radii instead of the diameters, we talk about the \emph{$k$-clustering problem of minimizing $\mathcal{F}$ over the radii}.
	
	The following result is presented in \cite{Capoyleas and others} for the Euclidean subcase.
	
	\begin{theorem}\label{radio}
		Let $S$ be a set of $n$ points in $\mathbb{M}^2$. Consider the $k$-clustering problem of minimizing a monotone increasing function $\mathcal{F}:\mathbb{R}^k\rightarrow \mathbb{R}$ that is applied over the diameters or over the radii of $k$ subsets of $S$. Then there is an optimal $k$-clustering such that each pair of clusters is linearly separable.
	\end{theorem}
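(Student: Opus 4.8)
The plan is to deduce the $k$-set statement from the two-set separation result by an extremal (exchange) argument. Since $S$ is finite, there are only finitely many partitions of $S$ into $k$ clusters, so the minimum of $\mathcal{F}$ is attained; among all optimal clusterings I would fix one, say $\mathcal{C}=\{C_1,\dots,C_k\}$, that in addition minimizes the total perimeter $\Phi(\mathcal{C})=\sum_{i=1}^k \mathrm{perimeter}(\mathrm{conv}(C_i))$. The assertion to prove is then that this particular $\mathcal{C}$ has every pair of clusters separable.

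Treating the diameter case first, suppose some pair $(C_i,C_j)$ is not separable. Two disjoint compact convex sets in the plane can be strictly separated by a line, so non-separability forces $\mathrm{conv}(C_i)\cap\mathrm{conv}(C_j)\neq\emptyset$. I would apply Theorem \ref{separacion de dos conjuntos2} with $A=C_i$ and $B=C_j$ to get separable $C_i',C_j'$ with $C_i'\cup C_j'=C_i\cup C_j$ and $\mathrm{diam}(C_i')\le\mathrm{diam}(C_i)$, $\mathrm{diam}(C_j')\le\mathrm{diam}(C_j)$; the construction assigns each point of the union to exactly one side of the splitting line, so $C_i',C_j'$ are disjoint and replacing $C_i,C_j$ by them again yields a $k$-clustering of $S$. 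Because $\mathcal{F}$ is monotone increasing and only these two diameters change (and do not grow), the new clustering stays optimal, while Corollary \ref{perimeter} makes $\mathrm{perimeter}(\mathrm{conv}(C_i'))+\mathrm{perimeter}(\mathrm{conv}(C_j'))$ strictly smaller than before. This contradicts the minimality of $\Phi(\mathcal{C})$, so all pairs are separable. The same reasoning applies verbatim when $\mathcal{F}$ acts on the radii, provided the analogue of Theorem \ref{separacion de dos conjuntos2} for the radius is available, since the perimeter monovariant of Corollary \ref{perimeter} is purely geometric and indifferent to which measure is being optimized.

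Two technical points need care. First, the strict perimeter drop in Corollary \ref{perimeter} is guaranteed only when the two hulls overlap in a two-dimensional region; degenerate configurations, for instance when some cluster is collinear and its segment meets another hull only along a segment, leave $\Phi$ unchanged. I would dispose of these either by a preliminary perturbation of $S$ into general position (diameters, radii, perimeters and the optimal value of $\mathcal{F}$ vary continuously, and a weakly separating line survives in the limit, so an optimal separable clustering of the perturbed points yields one for $S$) or by refining the monovariant with a lexicographic tie-breaker. Second, and more seriously, for radii the exchange step above is not yet justified, because Theorem \ref{separacion de dos conjuntos2} is stated for diameters only; what is required is its exact counterpart for the radius of the smallest enclosing ball, namely that every $A,B$ admit separable $A',B'$ with $A'\cup B'=A\cup B$, $\mathrm{radius}(A')\le\mathrm{radius}(A)$ and $\mathrm{radius}(B')\le\mathrm{radius}(B)$.

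Establishing that radius analogue is where I expect the real difficulty to lie. I would try to reproduce the bad-pair machinery of Section \ref{linear separability of clusters}, now declaring $(A_i,B_j)$ bad when $\mathrm{radius}(A_i\cup B_j)>\mathrm{radius}(A)$ and working with smallest enclosing balls $B(c_A,r_A)\supseteq A$ and $B(c_B,r_B)\supseteq B$ in place of diametral pairs. The obstruction is that the radius is not a pairwise quantity: in the plane, fitting a set into a ball of radius $r$ is controlled by triples of points (via Helly's theorem) rather than by pairs, so the clean quadrilateral inequalities $\|a_i-a_{i'}\|+\|b_j-b_{j'}\|\ge\|a_i-b_j\|+\|a_{i'}-b_{j'}\|$ that drive Lemmas \ref{cross0}--\ref{cross} have no immediate substitute. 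The crux will therefore be to isolate a ball-containment inequality, valid in every normed plane, that plays the role those quadrilateral estimates play for the diameter; once such a radius separation theorem is in hand, the identical extremal argument above completes the radii case exactly as for diameters.
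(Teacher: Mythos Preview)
Your diameter argument is exactly the paper's: among optimal clusterings pick one minimizing $\sum_i \mathrm{perimeter}(\mathrm{conv}(C_i))$, then use Theorem~\ref{separacion de dos conjuntos2} together with the strict perimeter drop of Corollary~\ref{perimeter} to rule out any non-separable pair.

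For the radii case, however, you are making the problem far harder than it is. You propose to rebuild the bad-pair machinery of Section~\ref{linear separability of clusters} with ``bad'' redefined via enclosing radii, and you correctly identify the obstruction: radius is governed by triples (Helly), not pairs, so the quadrilateral inequalities behind Lemmas~\ref{cross0}--\ref{cross} have no obvious substitute. The paper sidesteps all of this. Given two clusters with minimal enclosing discs $B(u,r)\supseteq C_1$ and $B(v,r')\supseteq C_2$, it just looks at the spheres $S(u,r)$ and $S(v,r')$. In any normed plane their intersection has at most two connected components (Gr\"{u}nbaum~\cite{Grue1}, Banasiak~\cite{Ban}); if there are fewer than two, $C_1$ and $C_2$ are already separable. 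Otherwise pick one point $p_i$ in each component and set $L=\langle p_1,p_2\rangle$. The same Gr\"{u}nbaum--Banasiak structure of ball intersections gives that the part of $B(v,r')$ on the $u$-side of $L$ lies inside $B(u,r)$, and symmetrically; hence reassigning the points of $C_1\cup C_2$ according to their side of $L$ yields two separable sets still contained in $B(u,r)$ and $B(v,r')$ respectively, so neither enclosing radius increases. This elementary exchange replaces your hoped-for radius analogue of Theorem~\ref{separacion de dos conjuntos2} outright, and the extremal argument you already wrote then finishes the radii case just as it does for diameters.
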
	
	\begin{proof}
Regarding the diameter, let us consider an optimal solution of the problem that minimizes the sum of the $k$ perimeters of the convex hulls of the clusters. Theorem \ref{separacion de dos conjuntos2} and Corollary \ref{perimeter} imply that there exist a $k$-clustering (with smaller or equal sum of perimeters) such that every pair of clusters are separable, and the value of $\mathcal{F}$ does not increase. 
		
		Let us consider now the $k$-clustering problem of minimizing $\mathcal{F}$ over the radii. Let $C_1$ and $C_2$ be two clusters of $S$ from an optimal solution, and $B(u,r)$ and $B(v,r')$ be two minimal enclosing discs of $C_1$ and $C_2$, respectively, such that $C_1\subset B(u,r)$ and $C_2\subset B(v,r')$. If $S(u,r)\cap S(v,r')$ is the empty set or has only one connected component, $C_1$ and $C_2$ are separable. If $S(u,r)\cap S(v,r')$ has two different components $A_1$ and $A_2$, we consider a line $L$ meeting two points $p_1\in A_1$ and $p_2\in A_2$. 
		Let $u_i=p_i-(v-u)$ and $v_i=p_i+(v-u)$ for $i=1,2.$ Let $S_1(u,r)$ be the part of $S(u,r)$ on the same side of the line $\langle p_1, p_2\rangle$ as $u_1$ and $u_2$; let $S_2(u,r)$ be the part of $S(u,r)$ on the  side of  $\langle p_1, p_2\rangle$ opposite to  $u_1$ and $u_2$. 
		Let $S_1(v,r')$ be the part of $S(v,r')$ on the same side of the line $\langle p_1, p_2\rangle$ as $v_1$ and $v_2$; let $S_2(v,r')$ be the part of $S(v,r')$ on the  side of  $\langle p_1, p_2\rangle$ opposite to  $v_1$ and $v_2$. 
		Then,  $S_2(u,r)\subseteq \operatorname{conv}(S_1(v,r'))$ and $S_2(v,r')\subseteq \operatorname{conv}(S_1(u,r))$ (see Gr\"{u}nbaum \cite{Grue1} and Banasiak \cite{Ban}).
		The subsets $S\cap \operatorname{conv}(S_1(u,r))$ and  $S\cap \operatorname{conv}(S_1(v,r'))$ are two separable clusters, and the minimal enclosing radius of the new clusters are no greater.
		
   Consequently we can reassign the points for every pair of intersecting clusters according to their position relative to the line $L$. Finally we obtain a $k$-clustering  such that every two clusters are separable and the value of $\mathcal{F}$ does not increase.
	\end{proof}

%
%

	Therefore the optimal solution for the $k$-clustering problem of minimizing $\mathcal{F}$ over the diameter or over the radius is a planar dissection into $k$ convex polygonal regions,  such that each of them contains a cluster $C_i$. It can be represented by a graph $G=(V,E)$, where every vertex $v_i\in V$ corresponds to the region of a cluster $C_i$, and every edge $\{ij\}$ joints $v_i$ and $v_j$ if and only if a common boundary separates the polygonal regions that contain $C_i$ and $C_j$. The following algorithm by Capoyleas et al. (\cite{Capoyleas and others}) solves the $k$-clustering problem of minimizing a monotone increasing function $\mathcal{F}$ over the diameters or over the radii in the Euclidean plane.
	
	\begin{algorithm}

\label{algoritmo} Given a set $S$ of $n$ points in the plane:
		\begin{enumerate}
			\item For every graph (up to isometric ones) $G=\{V,E\}$ with $k$ vertices do the following:
			\item For every edge $\{ij\}\in E$, select a line and specify which side $H_{ij}$ of this line is to contain $C_i$ and which side $H_{ji}$ should contain $C_j$.
			\item For each point $p\in S$, determine to which side it belongs, and then for each $i$ evaluate $$R_i'=\bigcap_{\{ij\}\in E} H_{ij}$$
			Every region $R_i'$ contains $C_i$, and they are pairwise disjoint (see Lemma 8 in \cite{Capoyleas and others}). If  each point happens to fall into exactly one cluster, we have a candidate for an optimal solution.
			\item Evaluate the diameter (or the radius) of every cluster $C_i$, and then the function $\mathcal{F}$.
			\item Take the minimum of the values of $\mathcal{F}$.
		\end{enumerate}
	\end{algorithm}

	\begin{corollary}\label{algorithm clustering}
		Let $S$ be a set of $n$ points in $\mathbb{M}^2$. For any fixed $k$, the geometric $k$-clustering problem of minimizing a monotone increasing function $\mathcal{F}$ over the diameters or over the radii  is solvable by Algorithm \ref{algoritmo}. It takes polynomial time for the diameter.
	\end{corollary}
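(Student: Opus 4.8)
The plan is to verify two things about Algorithm \ref{algoritmo}: that its search space provably contains an optimal clustering (correctness), and that this search space consists of polynomially many configurations, each evaluable in polynomial time when $\mathcal{F}$ is applied to the diameters.

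First I would establish correctness using Theorem \ref{radio}. That theorem guarantees an optimal $k$-clustering in which every pair of clusters is linearly separable; equivalently, the plane is dissected into $k$ convex polygonal regions, one per cluster. Recording which pairs of regions share a boundary produces a graph $G$ on $k$ vertices, and each edge $\{ij\}$ comes equipped with a line separating $C_i$ from $C_j$. Since Algorithm \ref{algoritmo} ranges over all graphs on $k$ vertices (up to isomorphism) and, for each, over all admissible choices of separating lines, it in particular examines this graph together with these lines. By Lemma 8 of \cite{Capoyleas and others} the induced regions $R_i'$ are pairwise disjoint and each contains its cluster, so this configuration passes the test in step (3) and yields the optimal value of $\mathcal{F}$ in step (4).

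The key reduction for the running time is the observation that both the diameter and the radius of a cluster depend only on which points of $S$ it contains, not on the exact position of the separating lines. Hence, for the algorithm, two lines on a given edge are interchangeable whenever they induce the same partition of $S$ into the two halfplanes. The number of bipartitions of $n$ planar points realizable by a line is $O(n^2)$, so it suffices to enumerate $O(n^2)$ representative lines per edge. The number of graphs on $k$ vertices is a constant depending only on $k$, and each has at most $\binom{k}{2}$ edges; thus the total number of configurations is $O(n^2)^{\binom{k}{2}}=O(n^{k(k-1)})$, which is polynomial in $n$ for fixed $k$. For each configuration, assigning the $n$ points to the regions costs $O(nk)$, computing the diameter of each cluster requires $O(n^2)$ evaluations of the norm (each available in constant time from the oracle), and evaluating $\mathcal{F}$ costs $O(k)$; summing over all configurations gives a polynomial bound.

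The step I expect to require the most care is the distinction, flagged in the statement, between the diameter and the radius. The combinatorial enumeration above is identical in both cases, so the bottleneck for the radius is the per-cluster computation of a minimal enclosing ball in a general normed plane from the oracle alone: unlike the Euclidean disc, this is not known to be computable in polynomial time in the oracle model, which is exactly why the polynomial-time claim is asserted only for the diameter. I would therefore present the polynomial bound in full only for the diameter, and for the radius establish correctness (the algorithm still locates the optimum) while isolating any further cost in the minimal-enclosing-ball subroutine. A secondary point to justify carefully is the $O(n^2)$ bound on combinatorially distinct separating lines, together with the claim that sweeping the optimal line to such a representative position leaves both the partition of $S$ and the value of $\mathcal{F}$ unchanged.
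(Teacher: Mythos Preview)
Your proof is correct and follows the same strategy as the paper: correctness from Theorem \ref{radio} together with Lemma 8 of \cite{Capoyleas and others}, then a polynomial enumeration of (graph, line-choices) configurations with polynomial per-configuration work, and the observation that the radius case stalls only at the minimal-enclosing-ball subroutine. The one refinement you miss is that the adjacency graph of a planar dissection into $k$ convex regions is itself planar, so it has at most $3k-6$ edges rather than $\binom{k}{2}$; the paper uses this to get $O(n^{6k-12})$ configurations instead of your $O(n^{k(k-1)})$, and it also computes each cluster diameter in $O(n\log n)$ rather than $O(n^2)$---but neither sharpening is needed for the bare polynomial-time claim.
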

\begin{proof}
By Theorem \ref{radio}, Algorithm \ref{algoritmo} (see Lemma 8 and Theorem 9 in  \cite{Capoyleas and others} for details) works correctly in 
$\mathbb{M}^2$ too.

The number of non-isometric graphs with $k$ vertices is fixed. The number of edges is at most $3k-6$, and $n$ points can be separated by these edges in $O(n^{6k-12})$ different ways.	
	Regarding step (4) of Algorithm \ref{algoritmo}, the diameter of a set of $n$ points can be computed in $O(n \log n)$ time in $\mathbb{M}^2$ with the same algorithm that in $\mathbb{E}^2$ (\cite{P-S}). Therefore, the $k$-clustering problem for minimizing the diameter in $\mathbb{M}^2$ is solvable in polynomial time.
\end{proof}

It seems that there is not an optimal solution for determining the minimal enclosing radius of a set of points in $\mathbb{M}^2$. Two algorithms are presented in \cite{J} for strictly convex normed planes. The first one is similar to Elzinga/Hearn's and takes $\Omega(n^2)$ time. The other is similar to Shamos/Hoey's and  enables an $O(n)$ search for the optimal disk once the  farthest-point Voronoi diagram of the set is constructed. Nevertheless, the strictly convex case can be solved by an easier way because the radius and the covering circle of each cluster are determined by at most three points (\cite{Al-Ma-Sp}, \cite{Al-Ma-Sp2}). Hence it would be enough to check only $O(n^{3k})$ possibilities.

\subsection{$3$-clustering problems}\label{3-clustering}
Having in mind Theorem \ref{radio},
we can do the following in order to solve the \emph{$3$-clustering problem minimizing the maximum diameter}:
(1) Separate the $n$ points in all possible two linear separable sets ($O(n^4)$  possibilities); (2) Use Algortihm \ref{Avis} to split the second of these sets; (3) Determine the optimal solution.		
	This takes $O(n^6 \log^2 n)$ time if Avis' approach is used, and it could be improved with the algorithm by Asano et al.
But we prove in this section that Hauger-Rote's $3$-clustering approach for $\mathbb{E}^2$ (\cite{Hagauer-Rote}) works correctly in $\mathbb{M}^2$ with some modifications.

We fix a normal basis $\{x,y\}$ in $\mathbb{M}^2$ such that $x$ is Birkhoff orthogonal to $y$ (namely, such that $\|x\|\leq \|x+\lambda y\|$ for every $\lambda\in \mathbb{R}$). 
It is assumed that two given points of $S$ have different $x$ and $y$ coordinate (the points are rotated if it is necessary).
Given $d>0$, 
the algorithm searches all the possible linearly separable subsets $A,B,C,$ such that the maximum diameter is less than or equal to $d$. The point $a\in S$ with minimum $x$-coordinate is placed in $A$, and each point $a'\in S$ such as $\|a-a'\|\leq d$ is tested as the possible point of $A$  with the maximum $x$-coordinate. Any $u\in S\cap \overline{aa'}$ is assing to $A.$ The plane is divided in the following three zones by the lines $\langle a,a' \rangle$ and $a'+\beta y$ ($\beta\in \mathbb{R}$):
$$\begin{array}{ll}
\textsc{North}:=&\{u\in S/ \; u=\alpha a+(1-\alpha) a'+\beta y, \text{ with }1>\alpha>0,\beta> 0\}\\
\textsc{\textsc{South}}:=&\{u\in S/ \;u=\alpha a+(1-\alpha) a'+\beta y, \text{ with }1>\alpha>0,\beta<0\}\\
\textsc{\textsc{East}}:=&\{u\in S/ \; u=-\alpha a+(1+\alpha) a'+\beta y, \text{ with }\alpha>1\}\\
\end{array}
$$

There is not any point of $S$ on the "left" of $a+\beta y$ ($\beta\in \mathbb{R}$). \textsc{East} contains the points of $S$ on the "right" of the  line $a'+\beta y$. The points of $S$ on the left of $a'+\beta y$ are contained either in \textsc{\textsc{North}} (if they are "above" $\overline{aa'}$) or in \textsc{South} (if they are "bellow" $\overline{aa'}$).

 Solutions are tested in three different cases: Case 1, $\text{\textsc{North}}\subseteq A$; Case 2, $\text{\textsc{South}}\subseteq A$; and  Case 3, $\text{\textsc{North}}$ and $\text{\textsc{South}}$ are not completely contained in $A$.
We note  $A_{cand}$ to the set of points that \textit{could} be placed in $A$ for every candidate $a'$:
$$A_{cand}=S\cap B(a,d)\cap B(a',d).$$ 

\begin{lemma}\label{north-south} With the previous notations, the following holds in $\mathbb{M}^2$:
		$$\operatorname{diam}(A_{cand}\cap \operatorname{\textsc{North}})\leq d \hspace{0.3cm} \text{and}\hspace{0.3cm} \operatorname{diam}(A_{cand}\cap \operatorname{\textsc{South}})\leq d.$$
	\end{lemma}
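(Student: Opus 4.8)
The plan is to prove both bounds at once by a one‑dimensional convexity argument, the case of \textsc{South} being obtained from that of \textsc{North} by replacing $y$ with $-y$. So fix $p,q\in A_{cand}\cap\operatorname{\textsc{North}}$ and aim to show $\|p-q\|\le d$. Write them in the affine frame used to define the zones, $p=\alpha_p a+(1-\alpha_p)a'+\beta_p y$ and $q=\alpha_q a+(1-\alpha_q)a'+\beta_q y$, with $0<\alpha_p,\alpha_q<1$ and $\beta_p,\beta_q>0$, and introduce the feet on the base line $p_0:=\alpha_p a+(1-\alpha_p)a'=p-\beta_p y$ and $q_0:=\alpha_q a+(1-\alpha_q)a'=q-\beta_q y$. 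The constraints $0<\alpha_p,\alpha_q<1$ say exactly that $p_0,q_0$ lie on the segment $\overline{aa'}$.

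The crux is the inclusion $\overline{aa'}\subseteq B(p,d)\cap B(q,d)$. Indeed $p,q\in A_{cand}\subseteq B(a,d)\cap B(a',d)$, so by symmetry of the norm $\|a-p\|,\|a'-p\|,\|a-q\|,\|a'-q\|\le d$; since the balls are convex, the whole segment $\overline{aa'}$ then lies in $B(p,d)$ and in $B(q,d)$. In particular $\|p-q_0\|\le d$ and $\|p_0-q\|\le d$, because $q_0,p_0\in\overline{aa'}$.

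It remains to write $p-q$ as a convex combination of these two controlled differences. From $p-q_0=(p_0-q_0)+\beta_p y$ and $p_0-q=(p_0-q_0)-\beta_q y$, the choice $\lambda=\beta_p/(\beta_p+\beta_q)$ gives $\lambda(p-q_0)+(1-\lambda)(p_0-q)=(p_0-q_0)+(\beta_p-\beta_q)y=p-q$; and $\beta_p,\beta_q>0$ forces $\lambda\in(0,1)$. Convexity of the norm now yields $\|p-q\|\le\max\{\|p-q_0\|,\|p_0-q\|\}\le d$. The same computation settles \textsc{South}: there $\beta_p,\beta_q<0$, and $\lambda=\beta_p/(\beta_p+\beta_q)$ still lies in $(0,1)$.

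The one place that needs care is seeing why this is the right reduction. The quadrilateral (diagonal) inequality exploited in Lemma~\ref{cross0}, applied to $a,p,q,a'$, only gives $\|p-q\|\le 2d-\|a-a'\|$, which is useless here; the gain comes entirely from the vertical structure, namely that the feet $p_0,q_0$ land inside $\overline{aa'}$ and that this segment sits inside both balls $B(p,d)$ and $B(q,d)$, turning the estimate into convexity along the $y$‑direction. I note that the inequality itself uses only convexity of the balls and of the norm; the Birkhoff orthogonality fixed earlier serves to set up the zones (so that $0<\alpha<1$ means $p$ lies between the walls $a+\mathbb{R}y$ and $a'+\mathbb{R}y$), and is not otherwise needed.
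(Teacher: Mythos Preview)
Your argument is correct. The convex‐combination identity $p-q=\lambda(p-q_0)+(1-\lambda)(p_0-q)$ with $\lambda=\beta_p/(\beta_p+\beta_q)\in(0,1)$ is exactly right, and together with $\overline{aa'}\subseteq B(p,d)\cap B(q,d)$ it gives the bound directly; your remark that Birkhoff orthogonality plays no role in this particular lemma is also accurate.

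This is, however, a genuinely different route from the paper's. The paper does not compute with the feet $p_0,q_0$ at all: it invokes Proposition~1(iii) of Capoyleas--Rote--Woeginger, arguing that the diameter of the region $A_{cand}\cap\textsc{North}$ is realised as the distance between two parallel supporting lines, and that by the shape of the region one of those lines is forced to pass through $a$ or through $a'$; membership in $B(a,d)\cap B(a',d)$ then finishes. Your approach trades this geometric picture for an explicit algebraic reduction: you never need the (nontrivial in $\mathbb{M}^2$) fact that diameter equals maximal width between parallel support lines, nor do you need to argue about which support line contains $a$ or $a'$. The gain is a fully self‐contained proof using nothing beyond convexity of balls and of the norm; the paper's version is shorter to state but leans on transplanting a Euclidean lemma to normed planes without spelling out why that transplant is legitimate.
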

\begin{proof}
Proposition 1(iii) in \cite{Capoyleas and others} for $\mathbb{E}^2$ can be applied for $\mathbb{M}^2$: due to the geometry of the figure, $\mathrm{diam}(A_{cand}\cap \operatorname{\textsc{\textsc{North}}})$ is equal to the distance between two support lines, and one of them has to pass through $a$ or $a'$. Since all the points are within $B(a,d)\cap B(a',d)$, the diameter is at most $d$. Similarly for $\mathrm{diam}(A_{cand}\cap \operatorname{\textsc{\textsc{South}}})$.
\end{proof}

\begin{lemma}\label{north-south2}
 Let us assume the following conditions in $\mathbb{M}^2$:
 \begin{itemize}
 \item $\max\{\operatorname{diam}(A),\operatorname{diam}(B),\operatorname{diam}(C)\}\leq d$,
 \item $A,B,C$ are separable,
 \item $B\cap \text{\textsc{North}}\neq \varnothing$ and  $C\cap \text{\textsc{\textsc{South}}}\neq \varnothing.$
 \end{itemize}
 If there exist a pair of points $u=(u_x,u_y),v=(v_x,v_y)\in \text{\textsc{East}}$ such that $\|u-v\|> d$ and $u_y>v_y$, then $u\in B$ and $v\in C.$
\end{lemma}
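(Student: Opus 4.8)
The lemma asserts that under the three hypotheses, any two far-apart points $u,v\in\textsc{East}$ (with $u$ above $v$, and $\|u-v\|>d$) must be split between $B$ and $C$, with the higher one going to $B$. Since $A,B,C$ partition $S$ and are separable, my plan is to rule out every other distribution of $u,v$ among the three clusters by deriving a diameter violation $\mathrm{diam}(\cdot)>d$ in each case, contradicting the first hypothesis.

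**Main case analysis.** First I would observe that $u,v$ cannot lie in the same cluster: since $\|u-v\|>d$, they cannot both be in $A$, both in $B$, or both in $C$. Next I would eliminate the possibility that one of $u,v$ lies in $A$. Here I expect to use the geometry of \textsc{East}: both points lie to the ``right'' of the line $a'+\beta y$, while $A$ is built from points in $B(a,d)\cap B(a',d)$ together with the segment $\overline{aa'}$ and (in the relevant case) \textsc{North} or \textsc{South}. The plan is to show that a point of \textsc{East} that is far from its \textsc{East}-companion cannot be consistently added to $A$ without either violating $\mathrm{diam}(A)\le d$ or breaking separability against the witness points $B\cap\textsc{North}$ and $C\cap\textsc{South}$ guaranteed by the third hypothesis. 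So the two points must be split between $B$ and $C$.

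**Pinning down the orientation.** The crux is the last claim: the \emph{upper} point $u$ goes to $B$ and the \emph{lower} point $v$ goes to $C$, not the reverse. This is where the separability hypothesis does the real work. The sets $A,B,C$ are pairwise separable by lines, and we are told $B$ reaches into \textsc{North} (above $\overline{aa'}$) while $C$ reaches into \textsc{South} (below $\overline{aa'}$). I would take a point $b_0\in B\cap\textsc{North}$ and a point $c_0\in C\cap\textsc{South}$ and use the separating line between $B$ and $C$: because $b_0$ sits on the north side and $c_0$ on the south side of $\overline{aa'}$, the separating line of $B$ and $C$ must have an orientation compatible with ``north$\leftrightarrow B$, south$\leftrightarrow C$.'' A point of \textsc{East} lying on the $B$-side of that line is forced to be the one with larger $y$-coordinate, which is $u$; hence $u\in B$ and $v\in C$. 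The forward-looking heuristic is that the $y$-monotonicity in the statement ($u_y>v_y$) is exactly the shadow of this separating line, so I would make the argument by tracking which side of the $B$/$C$-separator each of $u,v$ falls on.

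**Expected main obstacle.** The hard part will be the orientation step, not the diameter bookkeeping. Showing $\|u-v\|>d$ forces a split is routine once the clusters are described as intersections of balls; but arguing that the split respects the vertical order requires carefully exploiting that $B$ and $C$ are \emph{linearly} separated and that their witness points straddle $\overline{aa'}$. In a general normed plane I cannot appeal to Euclidean angle arguments, so I would rely only on the separating line and on convexity (as in Lemma~\ref{cross0}), possibly combined with a quadrangle inequality of the form $\|u-c_0\|+\|b_0-v\|\ge\|u-v\|+\|b_0-c_0\|$ to expose a contradiction whenever $u$ is assigned to $C$ and $v$ to $B$. Making that inequality go through with the correct cyclic order of $\{u,v,b_0,c_0\}$ is the delicate point I would spend the most care on.
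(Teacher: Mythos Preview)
Your plan has a genuine gap in the orientation step, and it also over-complicates the exclusion of $A$.

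On the easy part: ruling out $u\in A$ or $v\in A$ requires no diameter or separability argument. In the standing setup of the section, $a'$ is being tested as the point of $A$ with the largest $x$-coordinate, while every point of \textsc{East} has $x$-coordinate strictly larger than $a'_x$. Hence no point of \textsc{East} can belong to $A$, so $u,v\in B\cup C$ immediately, and since $\|u-v\|>d$ they lie in different clusters.

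On the hard part, your two proposed tools are not sufficient. The claim that ``a point of \textsc{East} on the $B$-side of the $B/C$ separator is forced to be the one with larger $y$-coordinate'' is unjustified: knowing only that some $b_0\in B$ lies above $\langle a,a'\rangle$ and some $c_0\in C$ lies below does not constrain the orientation of the $B/C$ separating line enough to decide which of $u,v$ falls on which side out in \textsc{East}. And your quadrangle inequality, when it applies, gives only $\|u-c_0\|+\|b_0-v\|\ge\|u-v\|+\|b_0-c_0\|>d+\|b_0-c_0\|$, which is no contradiction since nothing bounds $\|b_0-c_0\|$ from below.

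What you are missing is exactly the norm-specific ingredient the paper built into the setup: the coordinate basis $\{x,y\}$ was chosen with $x$ Birkhoff-orthogonal to $y$, so vertical lines support balls at their horizontal extreme points. The paper's proof does use a crossing-of-segments argument (against $\overline{u v'}$ or against $\overline{aa'}\subset\mathrm{conv}(A)$) for one region of possible positions of $v$, much as you suggest. But for the remaining positions it uses Birkhoff orthogonality to obtain a direct distance comparison, e.g.\ $\|v-u'\|\ge\|v-u\|>d$ (forcing $v\notin B$, hence $v\in C$) or symmetrically $\|u-v'\|\ge\|u-v\|>d$ (forcing $u\notin C$, hence $u\in B$). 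Without invoking this property of the chosen basis, the orientation step does not close.
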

\begin{proof}
Since $\|u-v\|>d$, the points $u$ and $v$ can not be situated in the same subset of the partition $A,B,C$. We can choose $u'=(u'_x,u'_y)\in B\cap \text{\textsc{North}}$ and $v'=(v'_x,v'_y)\in C\cap \text{\textsc{South}}$.


Let us assume that $u_y>v'_y$. If $v$ is situated in the shaded zone in Figure \ref{contraej_17_1}, $v$ must belong to $C$, because in other case either the pair of segments $\overline{vu'}$ and $\overline{uv'}$ or the pair of the segments $\overline{vu'}$ and $\overline{aa'}$ cross.

If $v$ is not situated in the shaded zone in Figure \ref{contraej_17_1} and $u_x<v_x$ (for instance, $v=v_1$ in Figure \ref{contraej_17_1}), we consider the two intersection points of the line $u+\lambda y$ with the line $v+\lambda x$ and with the line $v+\lambda (u'-v)$, that we note by $\bar{u}$ and $\widetilde{u}$, respectively. Since $x$ is Birkhoff orthogonal to $y$, $u+\lambda y$ supports $S(v,\|\bar{u}-v\|)$ on $\bar{u}$, and  $\|v-u'\|\geq\|v-\widetilde{u}\|\geq \|v-u\|\geq \|v-\bar{u}\|$. As a result of $\|v-u'\|\geq \|v-u\|> d$, $v\in C$.
\begin{figure}[ht]
	\begin{center}
		\scalebox{1.1}{\includegraphics[bb=228 90 550 241, clip]{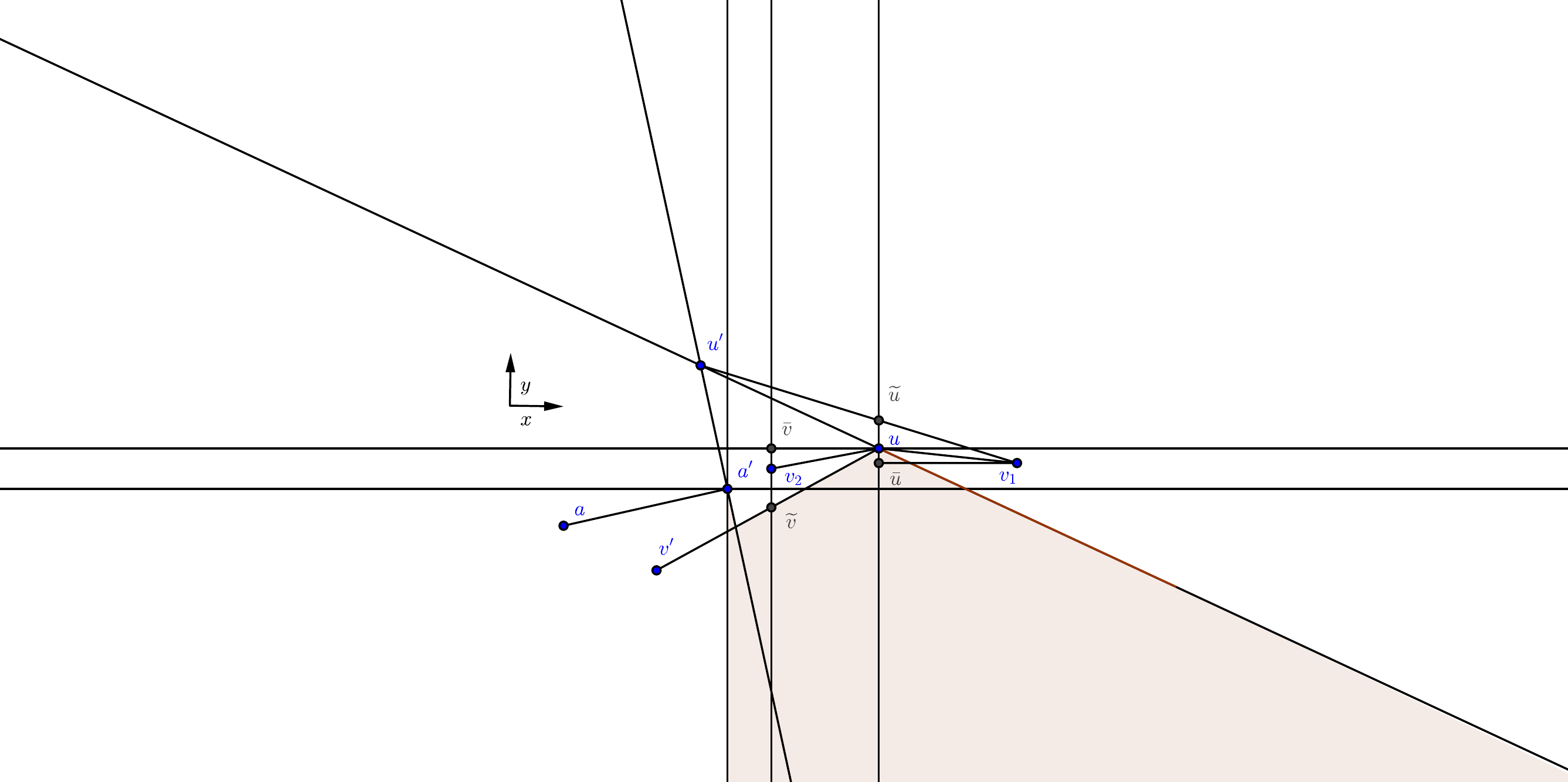}}
		\caption{}
		\label{contraej_17_1}
	\end{center}
\end{figure}

If $v$ is not situated in the shaded zone in Figure \ref{contraej_17_1} and  $u_x>v_x$ (for instance, $v=v_2$ in Figure \ref{contraej_17_1}), we consider the two intersection points of the line $v+\lambda y$ with the line $u+\lambda x$ and with the line $u+\lambda (u-v')$, that we note by $\bar{v}$ and $\widetilde{v}$, respectively. Since $x$ is Birkhoff orthogonal to $y$, the line $v+\lambda y$ is the support line of $S(u,\|u-\bar{v}\|)$ on $\bar{v}$, and 
$\|u-v'\|\geq \|u-\widetilde{v}\|\geq \|u-v\|\geq \|u-\bar{v}\|$.
As a result of $\|u-v'\|\geq \|u-v\|> d$, $u\in B$.

The analysis is similar if $u_y<v'_y$.
\end{proof}

The algorithm of Hagauer and Rote works in the following way.

\begin{algorithm}

\label{algorithmHR}
	Fix $a$ with the minimum $x$-coordinate. Then, for every $a'\in S$:
\begin{enumerate}
\item Calculate \textsc{North}, \textsc{South} and \textsc{East} ($O(n)$ time).
\item Test Case 1 ($\text{\textsc{North}}\subseteq A$). Note $H=\{a,a',\mathrm{\textsc{North}}\}$ and check if $\mathrm{diam}(H)\leq d$ ($O(n\log n)$ time). If yes, define
$$
A:=H\cup \left(\mathrm{\textsc{South}}\cap \big(\cap_{x\in H} B(x,d)\big)\right),
$$
 Obtain $B$ and $C$ solving a $2$-clustering problem for the set $S\setminus A$ (for instance, in $O(n^2 \log^2 n)$ time by Algorithm \ref{Avis}).
\item Test Case 2 and manage it in a similar way as Case 1 ($O(n^2 \log^2 n)$ time).
\item Test Case 3 (neither \textsc{North} nor \textsc{South} are completely contained in $A$). Assign the points of $S$ that are initially forced to be in $A,B,C$ (by Lemma \ref{north-south2} and the rest of conditions)  to the initial sets $A_0,B_0,C_0$:\\
$\begin{array}{ll}
A_0:=&\{a,a'\}\\
B_0:=&\{u\in \mathrm{\textsc{North}}/\; u\notin \cap_{x\in A_0}B(x,d)\} \\
     &\cup \{u\in \mathrm{\textsc{East}}/\; \exists v\in \mathrm{\textsc{East}}, \|u-v\|>d, u_y>v_y\}\\
C_0:=&\{u\in \mathrm{\textsc{South}}/\;  u\notin \cap_{x\in A_0}B(x,d)\}\\
     & \cup \{u\in \mathrm{\textsc{East}}/\; \exists v\in \mathrm{\textsc{East}}, \|u-v\|>d, u_y<v_y\},
     \end{array}$\\
and the rest of the points of $S$ to one of the following candidate sets:\\
$\begin{array}{ll}
AB_{cand}:=&\mathrm{\textsc{North}}\setminus B_0\\
CA_{cand}:=&\mathrm{\textsc{\textsc{South}}}\setminus C_0\\
BC_{cand}:=&\mathrm{\textsc{East}}\setminus ({B_0\cup C_0}).
     \end{array}$\\
Stop if $B_0$ and $C_0$ are not disjoint (there is not a solution by  Lemma \ref{north-south2}). In other case, assign the points of the candidate sets to $A,B,C$ by Hauger and Rote's procedure \emph{distribute} (see \cite{Hagauer-Rote}).
\end{enumerate}
\end{algorithm}

The set $A$ defined in Case 1 (as well as in Case 2) is the uniquely maximal feasible set with diameter  less than or equal to $d$ (Lemma \ref{north-south}).
The procedure \emph{distribute} does not depend on the metric, therefore one solution is found in Case 3 (if there exists).


In order to implement the algorithm in $\mathbb{E}^2$, Hauger and Rote use
the $d$-\emph{ball hull} (also called $d$-\emph{circular hull}) of a set $S$, and the data structure introduced by Hershberger and Suri (\cite{H-S}).
We justify in the Appendix that this data structure can be used in $\mathbb{M}^2$ (see Proposition \ref{HS-structure}).

\begin{theorem}\label{tripartition}
Given a set of $n$ points in $\mathbb{M}^2$ and $d>0$, we can determine with the Algorithm \ref{algorithmHR} whether there is a
partition of $S$ into sets $A,B,C$ with diameters at most $d$. This can be done in $O(n^3 \log^2 n)$ time.
\end{theorem}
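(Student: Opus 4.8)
The plan is to verify that Algorithm \ref{algorithmHR} is both correct and runs in the claimed time, treating correctness and complexity separately. For correctness, I would argue that for the fixed point $a$ of minimum $x$-coordinate, if a valid tripartition $A,B,C$ with all diameters at most $d$ exists, then the point of $A$ with maximum $x$-coordinate is some $a'\in S$ with $\|a-a'\|\le d$; hence looping over all candidates $a'$ is exhaustive. For each $a'$ I would confirm that the three geometric cases (\textsc{North}$\subseteq A$, \textsc{South}$\subseteq A$, and neither) are mutually exhaustive, so at least one is tested. In Cases 1 and 2, Lemma \ref{north-south} guarantees that $A_{cand}\cap\text{\textsc{North}}$ (respectively \textsc{South}) has diameter at most $d$, so the greedily maximal set $A$ defined in step (2) is feasible and, by the same lemma, is the unique maximal feasible choice; the remaining points $S\setminus A$ then reduce to an ordinary 2-clustering handled by Algorithm \ref{Avis}. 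In Case 3, Lemma \ref{north-south2} forces each point of \textsc{East} with a too-far partner into a determined side, justifying the initial sets $A_0,B_0,C_0$; if $B_0\cap C_0\neq\varnothing$ no solution exists, and otherwise the metric-independent procedure \emph{distribute} of Hagauer and Rote completes the assignment. I would stress that correctness rests entirely on Lemmas \ref{north-south} and \ref{north-south2}, which port the two key Euclidean geometric facts to $\mathbb{M}^2$, so the structural argument of Hagauer and Rote carries over verbatim once those lemmas are in hand.

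For the complexity, I would bound the work per candidate $a'$ and multiply by the $O(n)$ choices of $a'$ and the single outer fixing of $a$. Computing \textsc{North}, \textsc{South}, \textsc{East} costs $O(n)$; the diameter tests and the embedded 2-clustering in Cases 1 and 2 cost $O(n^2\log^2 n)$ each via Algorithm \ref{Avis}; forming $A_0,B_0,C_0$ and running \emph{distribute} in Case 3 is no more expensive once the underlying ball-hull data structure is available. Summing over the $O(n)$ candidates $a'$ gives $O(n^3\log^2 n)$, matching the stated bound; the outer fixing of $a$ with minimum $x$-coordinate is a single choice and adds no factor.

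The main obstacle is not the bookkeeping but ensuring that the \emph{data-structural} ingredient behaves in $\mathbb{M}^2$: Hagauer and Rote's efficiency relies on the Hershberger--Suri structure for $d$-ball hulls, and the running-time claim is only honest if that structure supports the required queries in a general normed plane at the same asymptotic cost. I would therefore invoke Proposition \ref{HS-structure} from the Appendix to certify that this structure is available in $\mathbb{M}^2$, which is what allows the per-candidate cost to stay within the Euclidean budget. A secondary point to handle carefully is the reliance on the Birkhoff-orthogonal basis $\{x,y\}$ and the general-position assumption that no two points share an $x$- or $y$-coordinate; these are used implicitly in the zone decomposition and in Lemma \ref{north-south2}, so I would note that the preprocessing rotation restores them without affecting the asymptotic cost. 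With Lemmas \ref{north-south}, \ref{north-south2} and Proposition \ref{HS-structure} as black boxes, the remainder is a direct transcription of the Euclidean argument, and the theorem follows.
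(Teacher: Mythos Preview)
Your proposal is correct and follows essentially the same route as the paper: reduce correctness to Lemmas~\ref{north-south} and~\ref{north-south2}, let the remaining Hagauer--Rote machinery (including \emph{distribute}) run unchanged, and appeal to Proposition~\ref{HS-structure} for the data-structure cost. The one ingredient you leave implicit but the paper names explicitly is Theorem~\ref{separacion de dos conjuntos2}: the algorithm only searches linearly separable triples $A,B,C$, so completeness requires knowing that if \emph{any} tripartition with diameters $\le d$ exists then a separable one does too (this is also what makes the call to Algorithm~\ref{Avis} valid in $\mathbb{M}^2$); you should cite it alongside the two lemmas rather than rely on it tacitly through the hypotheses of Lemma~\ref{north-south2}.
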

\begin{proof}
Hagauer-Rote's proof (for $\mathbb{E}^2$) of the first part of statement  depends on some lemmas  (see Lemma 3 to Lemma 6 in \cite{Hagauer-Rote}) and on Theorem \ref{separacion de dos conjuntos}.
Once Theorem \ref{separacion de dos conjuntos}, and Lemma 3 and Lemma 4 in \cite{Hagauer-Rote} are extended to $\mathbb{M}^2$ by our Theorem \ref{separacion de dos conjuntos2}, Lemma \ref{north-south} and Lemma \ref{north-south2}, respectively, the rest of the lemmas and proofs can be applied to any normed plane. Regarding the complexity of the algorithm, the Hershberger and Suri's data structure  can be managed (see Proposition \ref{HS-structure} in Appendix). Hence, for every $a'\in S$ Case 1 and Case 2 take  $O(n^2 \log^2 n)$ time (using Algorithm \ref{Avis} as a subroutine), and Case 3 takes $O(n \log n)$ time as in $\mathbb{E}^2$ (see \cite{Hagauer-Rote} for details). Therefore, the $3$-clustering algorithm takes $O(n^3 \log^2 n)$ time.
\end{proof}
Finally, a binary search on the ${n \choose 2}$
 distances occurring in $S$  combined with Theorem \ref{tripartition}  solves the optimization problem.
\begin{theorem}
Given a set of $n$ points in $\mathbb{M}^2$, we can  construct in $O(n^3 \log^3 n)$ time a
partition of $S$ into sets $A,B,C$ such that the largest of the three dia-meters is as small as possible.
\end{theorem}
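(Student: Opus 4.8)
The plan is to reduce the optimization problem to a short sequence of decision problems, each resolved by Theorem \ref{tripartition}. The decisive structural observation is that the optimal value of the objective lies in a finite, explicitly computable set, over which one can binary search.

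First I would argue that the minimum achievable value of the largest cluster diameter is one of the ${n \choose 2}$ pairwise distances of $S$. Indeed, in any partition of $S$ into sets $A,B,C$, the diameter of each cluster is realized as $\|p-q\|$ for two points $p,q$ of that cluster; hence the largest of the three diameters equals $\|p-q\|$ for some pair $p,q\in S$. Consequently the optimum belongs to the set $D=\{\|p-q\| : p,q\in S,\ p\neq q\}$, which has at most ${n \choose 2}$ elements.

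Next I would record the monotonicity of feasibility: if $S$ can be partitioned into three sets each of diameter at most $d$, then the same partition certifies feasibility for every $d'\geq d$. Thus the predicate \emph{``$S$ admits a tripartition whose largest diameter is at most $d$''} is nondecreasing in $d$, and the optimum is the smallest $d\in D$ for which it holds; this is precisely what licenses a binary search over the sorted values of $D$. The algorithm then runs as follows. Compute the ${n \choose 2}$ distances and sort them, in $O(n^2\log n)$ time. Binary search on the sorted list: at each probe, with the current candidate $d$, call the decision procedure of Theorem \ref{tripartition} to test whether a tripartition with largest diameter at most $d$ exists. Since $|D|=O(n^2)$, the search makes $O(\log n)$ probes, each costing $O(n^3\log^2 n)$ by Theorem \ref{tripartition}. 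The smallest $d$ returning an affirmative answer is the optimum, and the partition produced at that step is an optimal $3$-clustering. The sorting cost is dominated, so the total running time is $O(\log n)\cdot O(n^3\log^2 n)=O(n^3\log^3 n)$.

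The substance of the argument is carried entirely by Theorem \ref{tripartition}, so no genuine obstacle remains; the reduction is elementary. The only points requiring care are the two just isolated — that the optimum is attained at a pairwise distance, so the search space is finite, and that feasibility is monotone in $d$, so the binary search is correct — both of which are routine to verify.
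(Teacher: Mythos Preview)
Your proposal is correct and follows exactly the approach the paper takes: a binary search over the ${n\choose 2}$ pairwise distances, using Theorem~\ref{tripartition} as the decision oracle at each of the $O(\log n)$ probes. The paper actually states this in a single sentence preceding the theorem, whereas you have spelled out the two routine verifications (the optimum is a pairwise distance, and feasibility is monotone in $d$) that make the binary search valid.
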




\section*{Appendix}\label{Appendix}
Given a set $S$ in $\mathbb{M}^2$ and $d>0$, the $d$-\emph{ball hull}  $\operatorname{bh}(S,d)$ of $S$ (also called $d$-\emph{circular hull}) is the intersection of all the balls of radius $d$ and center $x\in \mathbb{M}^2$ that contain $S$:
$$
\operatorname{bh}(S,d)=\bigcap_{S\subseteq B(x,d)}B(x,d).
$$
The data structure introduced by Hershberger and Suri  (\cite{H-S}) orders the points of the input set $S$ by their $x$-coordinates, and  situates them on the leaves of a complete binary tree 
$T(S)$. Every node of $T(S)$ represents the $d$-ball hull of the points in the leaves of its subtree. Therefore, the root of $T(S)$ represents the $d$-ball hull of $S$. The information about every node is stored like a doubly linked list of its vertices such that for every vertex the predecessor and the successor is known. Since a point can be the vertex of more than one ball hull, for economizing space every point is only stored as vertex at the highest level in the tree at which it appears on a ball hull.  It is proved (\cite{H-S}, see Lemma 4.1 to Lemma 4.16, and Theorem 4.17) that the data structure $T(S)$ (therefore the ball hull of $S$) can be built initially in $O(n \log n)$ time and it supports the following operations $(1)$ and $(2)$ in $\mathbb{E}^2$:
\begin{enumerate}
\item Given a query point $u\in S$, determine in $O(\log n)$ a point $v\in S$ such that $\|u-v\|\geq d$, if such a point exists.
\item It can be updated after a point deletion in $O(\log n)$ time.
\end{enumerate}


The intersection of two spheres in $\mathbb{M}^2$ is always the  union of two segments, each of which may degenerate to a point or to the empty set (\cite{Grue1}, \cite{Ban}; see also \cite[$\S$ 3.3]{MSW}). As a consequence, it is obtained the following (\cite{Ma-Ma}).
\begin{lemma}\label{3.0extended}
	Given $d>0$, for every pair of points $p,q \in \mathbb{M}^2$ whose distance is less than or equal to $2d$, there exist two circular arcs of radius  $d$ meeting them (eventually only one, if they degenerate to the same segment) which belong to every disc of radius $d$ containing $p$ and $q$. These two arcs (if they are really two) are situated in different half planes  bounded by the line $\langle p, q \rangle$. The center of each disc defining these two minimal arcs is an extreme points of the segments $S(p,d)\cap S(q,d)$.
\end{lemma}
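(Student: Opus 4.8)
The plan is to describe the ball hull $\operatorname{bh}(\{p,q\},d)$ explicitly and then read off its two bounding arcs. First I would record the set of admissible centres $K:=B(p,d)\cap B(q,d)$, which is nonempty precisely because $\|p-q\|\leq 2d$. By the symmetry of the norm, $x\in B(p,d)\cap B(q,d)$ exactly when $\{p,q\}\subseteq B(x,d)$, so $\operatorname{bh}(\{p,q\},d)=\bigcap_{x\in K}B(x,d)$, and a point $y$ lies in the ball hull if and only if $K\subseteq B(y,d)$. Since each $B(y,d)$ is convex, this last condition only has to be tested on the extreme points of $K$, whence $\operatorname{bh}(\{p,q\},d)=\bigcap_{x\in\operatorname{ext}(K)}B(x,d)$; note also that $p$ and $q$ themselves lie on its boundary.

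Next I would locate the binding centres. The boundary of $K$ is made of an arc of $S(p,d)$ and an arc of $S(q,d)$ that meet along $S(p,d)\cap S(q,d)$, which by the intersection property stated just before the lemma is a union of two segments $\sigma_1,\sigma_2$ lying on opposite sides of $\langle p,q\rangle$; these are the ``corners'' of the lens $K$. The geometric heart of the matter is the claim that, for any $y$ in the ball hull, the point of $K$ farthest from $y$ is one of these corners, the guiding reason being that on the smooth $S(p,d)$- and $S(q,d)$-pieces of $\partial K$ the farthest point of the underlying sphere from $y$ falls outside the other ball, and hence outside $K$. Granting this, I choose for each segment the endpoint $c_i\in\sigma_i$ that realises this farthest-point role, and I let $\gamma_i$ be the arc of $S(c_i,d)$ joining $p$ and $q$ on the side of $\langle p,q\rangle$ opposite to $c_i$.

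It then remains to verify two containments. On the one hand $\operatorname{bh}(\{p,q\},d)\subseteq B(c_1,d)\cap B(c_2,d)$ is immediate because $c_1,c_2\in K$; the inner arcs $\gamma_1,\gamma_2$ bound this lens and sit in opposite half-planes, yielding the ``different half planes'' assertion. On the other hand I must show $\gamma_i\subseteq\operatorname{bh}(\{p,q\},d)$, that is, $K\subseteq B(z,d)$ for every $z\in\gamma_i$; since $\|z-c_i\|=d$ by construction, this is exactly the statement that $c_i$ is the point of $K$ farthest from $z$. Establishing that fact is the main obstacle, and here I expect to be unable to invoke the Euclidean argument (longest side opposite the obtuse angle), which the Remark after Lemma \ref{cross0} shows to fail in $\mathbb{M}^2$; instead I would argue with a support line of $K$ at $c_i$ together with Birkhoff orthogonality, in the spirit of the estimates in Lemma \ref{north-south2}, to compare $\|z-c_i\|$ with the distance from $z$ to every other point of $K$. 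Finally, when the norm is not strictly convex the two segments, and with them the two arcs, may collapse onto $\overline{pq}$, which is the ``only one'' degenerate case of the statement; in every case the chosen centres $c_1,c_2$ are, by construction, extreme points of $S(p,d)\cap S(q,d)$, completing the proof.
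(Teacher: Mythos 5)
Your setup is sound and, in fact, close in spirit to the source the paper relies on: the paper does not prove Lemma \ref{3.0extended} itself, but presents it as a consequence of the Gr\"unbaum--Banasiak fact that two spheres in $\mathbb{M}^2$ intersect in at most two segments, citing \cite{Ma-Ma} for the actual proof. Your reduction is correct as far as it goes: identifying the admissible centres $K=B(p,d)\cap B(q,d)$ (using symmetry of the norm), rewriting $\operatorname{bh}(\{p,q\},d)=\bigcap_{x\in K}B(x,d)$, testing containment only on extreme points of $K$, and singling out the corners of the lens, i.e.\ the endpoints of the two components $\sigma_1,\sigma_2$ of $S(p,d)\cap S(q,d)$, as the candidate centres. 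The easy inclusion $\operatorname{bh}(\{p,q\},d)\subseteq B(c_1,d)\cap B(c_2,d)$ and the half-plane assertion are also fine.

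The problem is that your argument stops exactly where the content of the lemma lies. What must be established is $\gamma_i\subseteq\operatorname{bh}(\{p,q\},d)$, that is, $\|z-x\|\leq d$ for every $z\in\gamma_i$ and \emph{every} $x\in K$ --- equivalently, that one fixed corner $c_i$ is a farthest point of $K$ from every $z$ on the opposite arc. You name this ``the main obstacle'' and then only describe what you \emph{would} do (``argue with a support line of $K$ at $c_i$ together with Birkhoff orthogonality, in the spirit of Lemma \ref{north-south2}''), without carrying out any estimate; that is a plan, not a proof, and the plan is not routine. Your supporting heuristic --- that on the $S(p,d)$- and $S(q,d)$-portions of $\partial K$ the farthest point of the underlying sphere from $y$ falls outside the other ball --- tacitly assumes smoothness and uniqueness of farthest points, both of which can fail in a normed plane (spheres may contain segments, and farthest points on a sphere need not be unique). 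Moreover, your choice of $c_i$ as ``the endpoint that realises this farthest-point role'' is not well defined a priori: the farthest point depends on $z$, and the assertion that a single extreme point of $\sigma_i$ works simultaneously for all $z\in\gamma_i$ is precisely what has to be proved, not a definition you are free to make. Until this step is supplied --- it is exactly the step that \cite{Ma-Ma} proves, exploiting the two-segment structure of $S(p,d)\cap S(q,d)$ --- the proposal does not establish the lemma.
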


We call \textit{$d$-minimal arc meeting $p$ and $q$} to each of  these arcs cited in Lemma \ref{3.0extended}.

\begin{lemma}
	Given $d'\geq d>0$,  every ball  of radius $d$ in $\mathbb{M}^2$ contains every $d'$-minimal arc meeting two points of the ball.
	\end{lemma}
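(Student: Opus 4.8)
The plan is to reduce the statement to a single monotonicity property and then prove it by a short ``pushing the center'' argument. Fix a ball $B(x,d)$ and two of its points $p,q$, so that $\|x-p\|\le d$ and $\|x-q\|\le d$. Since $\|p-q\|\le\|p-x\|+\|x-q\|\le 2d\le 2d'$, Lemma \ref{3.0extended} guarantees that the $d'$-minimal arcs meeting $p$ and $q$ exist, and by their very definition each such arc is contained in every disc of radius $d'$ that contains $p$ and $q$; equivalently, the $d'$-minimal arc lies in $\operatorname{bh}(\{p,q\},d')$. I would therefore aim to show the pointwise inclusion of the arc in $B(x,d)$, which is implied by $\operatorname{bh}(\{p,q\},d')\subseteq B(x,d)$, and the claim follows at once.

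First I would argue pointwise: let $z$ be any point of the $d'$-minimal arc and suppose, for contradiction, that $\|z-x\|>d$. The idea is to build from $B(x,d')$ a new disc of radius $d'$ that still contains $p$ and $q$ but excludes $z$, contradicting that $z$ lies in every such disc. Concretely, set $v=(x-z)/\|x-z\|$ and translate the center to $y'=x+(d'-d)\,v$. On the one hand, the triangle inequality gives $\|y'-p\|\le\|x-p\|+(d'-d)\le d+(d'-d)=d'$, and likewise $\|y'-q\|\le d'$, so $p,q\in B(y',d')$. On the other hand, because $z-x=-\|z-x\|\,v$ is parallel to the translation, the two shifts add along a single line: $\|z-y'\|=\|z-x\|+(d'-d)>d+(d'-d)=d'$, so $z\notin B(y',d')$. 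This is the desired contradiction, hence $\|z-x\|\le d$, and since $z$ was arbitrary the whole arc lies in $B(x,d)$.

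The only genuinely delicate point is the choice of translation direction: one must move the center \emph{directly away} from $z$ so that the distances from the new center to $p$ and $q$ grow by at most $d'-d$ (keeping them inside the radius-$d'$ disc) while the distance to $z$ grows by exactly $d'-d$ (pushing it outside). I expect this to be the crux, and it is precisely where collinearity of $z-x$ with $v$ is used; everything else is the triangle inequality and positive homogeneity of $\|\cdot\|$, so the argument is valid in every normed plane $\mathbb{M}^2$ with no appeal to Euclidean structure. It is worth remarking that the same computation yields the monotonicity $\operatorname{bh}(\{p,q\},d')\subseteq\operatorname{bh}(\{p,q\},d)$ for $d'\ge d$, which is the conceptual content behind the lemma.
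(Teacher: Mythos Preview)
Your proof is correct. The paper takes a slightly different route: it observes that the $r$-ball hull of $\{p,q\}$ is the region bounded by the two $r$-minimal arcs (Lemma~\ref{3.0extended}), and then simply \emph{cites} the monotonicity of the ball-hull operator in the radius (from \cite{Ma-Ma-Sp}) to obtain $\operatorname{bh}(\{p,q\},d')\subseteq\operatorname{bh}(\{p,q\},d)\subseteq B(x,d)$. You instead prove this inclusion directly by the ``push the center away from $z$'' translation $y'=x+(d'-d)(x-z)/\|x-z\|$, which is a clean and fully self-contained argument using only the triangle inequality and positive homogeneity. Your approach buys independence from the external reference and actually reproves the cited monotonicity in the special case needed here (as you yourself note at the end); the paper's approach is shorter on the page but outsources the real work. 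Conceptually the two agree, and you have correctly identified the monotonicity $\operatorname{bh}(\{p,q\},d')\subseteq\operatorname{bh}(\{p,q\},d)$ as the heart of the matter.
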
	
	\begin{proof}Let us consider $p,q\in B(u,d)$ for some $u\in \mathbb{M}^2.$ For every $r>0$,   the $r$-ball hull of the set $\{p,q\}$ is the set bounded by the two $r$-minimal arcs meeting $p$ and $q$ (Lemma \ref{3.0extended}).  Since the ball hull operator is decreasing with respect to the radius (\cite{Ma-Ma-Sp}), then $\mathrm{bh}(\{p,q\}, d')\subseteq \mathrm{bh}(\{p,q\}, d)$ and the statements holds.
\end{proof}
The following lemma  (\cite{Ma-Ma}) describes the geometry of the ball hull of a finite set in $\mathbb{M}^2$, and it is very similar to the Euclidean subcase.
\begin{lemma}\label{con1}
	Let $S=\{p_1,p_2,\dots,p_n\}$ be a finite set in $\mathbb{M}^2$. Then
	\[
	\operatorname{bh}(S,d)=\bigcap_{S\subset B(x_s,d)}B(x_s,d)=\operatorname{conv}(\bigcup_{i,j=1}^n \widehat{p_ip_j}),
	\]
	where $x_s$ are some extreme points of the components $S(p_i,d)\cap S(p_j,d)$, and $\widehat{p_ip_j}$ are $d$-minimal arcs  meeting points of $S$ and whose centers are these extreme points $x_s$.
\end{lemma}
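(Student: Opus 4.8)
The plan is to reduce the infinite family of balls to a finite, combinatorial one by passing to the \emph{dual region of admissible centres}. First I would set $C=\bigcap_{i=1}^{n}B(p_i,d)$; since the norm is symmetric, $x\in C$ if and only if $\|x-p_i\|\le d$ for all $i$, that is, if and only if $S\subseteq B(x,d)$. Thus $C$ is exactly the set of centres occurring in the definition of $\operatorname{bh}(S,d)$, and $\operatorname{bh}(S,d)=\bigcap_{x\in C}B(x,d)$. Reading the same inequality in the other variable gives the duality $z\in\operatorname{bh}(S,d)\iff C\subseteq B(z,d)$. The set $C$ is a compact convex intersection of finitely many balls, and its \emph{vertices}, the corner points where two bounding arcs meet, are exactly the extreme points of the components $S(p_i,d)\cap S(p_j,d)$ described in Lemma \ref{3.0extended}; these are the candidate centres $x_s$, and each of them satisfies $S\subseteq B(x_s,d)$. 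Since $\{x_s\}\subseteq C$, intersecting over this subfamily can only enlarge the body, so the inclusion $\operatorname{bh}(S,d)\subseteq\bigcap_{S\subset B(x_s,d)}B(x_s,d)$ is immediate; the whole difficulty is the reverse inclusion together with the identification of the boundary.

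For the convex-hull description one inclusion is easy. By Lemma \ref{3.0extended} every $d$-minimal arc $\widehat{p_ip_j}$ is contained in each disc of radius $d$ that contains $p_i$ and $p_j$; a fortiori it lies in every $B(x,d)$ with $x\in C$, hence in $\operatorname{bh}(S,d)$. As $\operatorname{bh}(S,d)$ is convex, $\operatorname{conv}\bigl(\bigcup_{i,j}\widehat{p_ip_j}\bigr)\subseteq\operatorname{bh}(S,d)$.

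The heart of the argument is the boundary analysis, which I would carry out next to obtain both reverse inclusions simultaneously. I would show that $\partial\operatorname{bh}(S,d)$ is a cyclic chain of $d$-minimal arcs whose corners are points of $S$. Given a boundary point $z$, choose a supporting ball $B(x,d)\supseteq S$ with $z\in S(x,d)$, so $x\in C$. I would then push $x$ to a vertex of $C$ without losing the support: rotating the supporting sphere until it meets $S$ in two points $p_i,p_j$ forces $x$ to an extreme point $x_s$ of $S(p_i,d)\cap S(p_j,d)$, whence $p_i,p_j\in S(x_s,d)$, and by Lemma \ref{3.0extended} the local boundary arc through $z$ is precisely the minimal arc $\widehat{p_ip_j}$. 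Consequently $\partial\operatorname{bh}(S,d)\subseteq\bigcup_{i,j}\widehat{p_ip_j}$, which yields $\operatorname{bh}(S,d)=\operatorname{conv}(\partial\operatorname{bh}(S,d))\subseteq\operatorname{conv}\bigl(\bigcup_{i,j}\widehat{p_ip_j}\bigr)$ and finishes the second equality. Moreover, along each such arc the body lies on the inner side of $S(x_s,d)$, so $\operatorname{bh}(S,d)$ coincides with the intersection of those balls $B(x_s,d)$ whose arcs appear on the boundary; since every other admissible ball $B(x,d)$, $x\in C$, already contains $\operatorname{bh}(S,d)$ and hence does not trim it further, this intersection equals $\bigcap_{S\subset B(x_s,d)}B(x_s,d)$, giving the first equality.

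I expect the main obstacle to be exactly this pushing step: verifying that a supporting centre can always be moved to a sphere-intersection extreme point while keeping the same support, and that the resulting arc is the minimal one of Lemma \ref{3.0extended}. Two features of a general normed plane make this more delicate than in $\mathbb{E}^2$. First, the components $S(p_i,d)\cap S(p_j,d)$ may be genuine segments rather than single points, so the rotation of the supporting sphere can stall along a flat piece, and one must check which endpoint of the segment to select. Second, when the norm is strictly convex the region $C$ is itself strictly convex, so the maximum of $x\mapsto\|z-x\|$ over $C$ need not be attained at a vertex $x_s$; this is precisely why the reduction to finitely many balls cannot be obtained by writing a point of $C$ as a convex combination of the $x_s$ (which would suffice only in the polytopal case) and must instead be read off from the boundary/arc structure above, using throughout that the two spheres involved share the same radius $d$.
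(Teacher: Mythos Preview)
The paper does not prove this lemma at all: it is quoted from \cite{Ma-Ma} (``The following lemma (\cite{Ma-Ma}) describes the geometry of the ball hull\ldots''), so there is no in-paper proof to compare your attempt against. What you have written is therefore a self-contained argument rather than a reconstruction of something in the text.

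That said, your outline follows the standard route used in \cite{Ma-Ma}: pass to the dual region $C=\bigcap_i B(p_i,d)$ of admissible centres, observe the easy inclusion $\operatorname{conv}(\bigcup\widehat{p_ip_j})\subseteq\operatorname{bh}(S,d)$ via Lemma \ref{3.0extended}, and then read off the boundary of $\operatorname{bh}(S,d)$ as a chain of minimal arcs by analysing how a supporting sphere $S(x,d)$ with $x\in C$ can be slid until $x$ reaches a vertex of $C$. Your identification of the two delicate points (segment components of $S(p_i,d)\cap S(p_j,d)$ when the norm is not strictly convex, and the fact that one cannot simply write $x$ as a convex combination of the $x_s$) is exactly right.

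The one place I would tighten is the ``pushing'' step. As you state it, you move the centre $x$ while keeping $z\in S(x,d)$, i.e.\ along $S(z,d)\cap C$; but this set need not contain a vertex $x_s$ of $C$. The cleaner formulation is the duality between arcs and vertices: an arc of $\partial C$ lying on $S(p_i,d)$ corresponds to the vertex $p_i$ of $\operatorname{bh}(S,d)$, and a vertex $x_s\in S(p_i,d)\cap S(p_j,d)$ of $C$ corresponds to the arc of $\partial\operatorname{bh}(S,d)$ on $S(x_s,d)$ joining $p_i$ and $p_j$, which is precisely a minimal arc by Lemma \ref{3.0extended}. Establishing this edge--vertex correspondence directly (rather than literally rotating one sphere) avoids the difficulty you flagged and gives both equalities at once. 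This is essentially how \cite{Ma-Ma} proceeds.
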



 All the proofs from Lemma 4.1 to Lemma 4.16 and Theorem 4.17 in \cite{H-S} can be extended almost word by word\footnote{If the norm is not strictly convex, the intersection of two balls could contain a segment. Regarding the extension of some statements of \cite{H-S}, every intersection segment must be computed only as one intersection point.} to $\mathbb{M}^2$ using the notion of minimal arc and Lemmas \ref{3.0extended} to \ref{con1}.
  As a consequence, Hershberger and Suri's data structure works in a normed plane as does in $\mathbb{E}^2$.

\begin{proposition}\label{HS-structure}
The structure for managing ball hulls in $\mathbb{E}^2$ described by Hershberger and Suri  works correctly in $\mathbb{M}^2$ and with the same time cost. It can be built initially in $O(n \log n)$ and supports the following operation:
\begin{enumerate}
\item Given a query point $u\in S$, determine in $O(\log n)$ a point $v\in S$ such that $\|u-v\|\geq d$, if such a point exists.
\item It can be updated after a point deletion in $O(\log n)$ time.
\end{enumerate}
\end{proposition}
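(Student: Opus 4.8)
The plan is to show that every geometric primitive on which the Hershberger--Suri construction rests admits a normed analogue once ``circular arc'' is replaced by ``$d$-minimal arc'', so that the combinatorial bookkeeping and the complexity analysis of \cite{H-S} carry over unchanged. The binary tree $T(S)$, the doubly linked vertex lists, the level-assignment that stores each point only at the highest node where it is a hull vertex, and the amortized accounting in the deletion routine are all purely combinatorial and make no reference to the underlying metric; hence it suffices to re-establish the handful of structural facts about ball hulls that \cite{H-S} invokes, and then to copy their arguments.

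First I would recall, via Lemma \ref{con1}, that in $\mathbb{M}^2$ the ball hull $\operatorname{bh}(S,d)$ is the convex hull of $d$-minimal arcs joining points of $S$, exactly as in $\mathbb{E}^2$ it is the region bounded by circular arcs. Lemma \ref{3.0extended} provides the normed substitute for the Euclidean fact that two circles meet in at most two points: any two points within distance $2d$ are joined by at most two $d$-minimal arcs, lying on opposite sides of their affine hull, with centers at the extreme points of $S(p,d)\cap S(q,d)$. Together with the monotonicity of the ball hull in the radius, these give that each boundary piece of a node's ball hull is a convex $d$-minimal arc whose endpoints are vertices stored in the linked list, and that the merge of two sibling hulls into their parent is carried out by the same bridge computation as in \cite{H-S}, now locating the pair of $d$-minimal arcs that join the two sub-hulls.

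Next I would verify the two advertised operations. For the query, the existence of a point $v\in S$ with $\|u-v\|\ge d$ is decided by a root-to-leaf descent of $T(S)$, at each internal node using the stored hulls to determine which child (if either) can harbour such a point; since each decision is a constant number of norm evaluations against the stored arcs, the descent costs $O(\log n)$. Its correctness is precisely the content of the query lemmas of \cite{H-S}, which depend only on the structural description of ball hulls just recalled and so hold verbatim in $\mathbb{M}^2$. For a deletion, removing a leaf forces the recomputation of the ball hulls along a single root-to-leaf path, each step being one merge; the same potential argument as in \cite{H-S} bounds the total work by $O(\log n)$ amortized, while building the structure bottom-up over $T(S)$ costs $O(n\log n)$.

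The step I expect to be the main obstacle is the failure of strict convexity. When the norm is not strictly convex, $S(p,d)\cap S(q,d)$ may contain a whole segment rather than two isolated points, so a ``vertex'' of a ball hull can be an edge and two $d$-minimal arcs may collapse onto a common segment. This is exactly the degeneracy flagged in the footnote, and the remedy is to represent each such intersection segment by a single canonical point in the vertex list; I would check that this convention keeps the local complexity of the lists bounded, so that neither the $O(\log n)$ query and update bounds nor the $O(n\log n)$ construction are affected. Once this case is handled, the proofs from Lemma 4.1 through Theorem 4.17 of \cite{H-S} transfer almost word for word and yield the proposition.
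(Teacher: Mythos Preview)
Your proposal is correct and follows essentially the same approach as the paper. The paper's argument is even terser than yours: it simply states that Lemmas~\ref{3.0extended} through~\ref{con1} (the minimal-arc description and the structural description of $\operatorname{bh}(S,d)$) supply the normed-plane substitutes for the Euclidean primitives, so that the proofs of Lemma~4.1 through Theorem~4.17 in \cite{H-S} carry over almost word for word, with the same footnoted caveat about treating an intersection segment as a single point when the norm is not strictly convex. Your more detailed discussion of the query descent, the merge-based update, and the degeneracy handling is consistent with this and adds nothing contrary to the paper's line.
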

%
	\nocite{*}
	\bibliographystyle{amsplain}

\end{document}